\definecolor{darkblue}{rgb}{0, 0, 0.85}
\definecolor{lightgreen}{rgb}{.85,1,.85}
\definecolor{lightred}{rgb}{1,.85,.85}
\definecolor{lightblue}{rgb}{.85,.85,1}
\definecolor{pink}{HTML}{EB346F}
\theoremstyle{plain}
\newtheorem{theorem}{Theorem}
\newtheorem{lemma}{Lemma}
\newtheorem{assumption}{Assumption}
\newtheorem*{suptheorem}{Theorem}
\newcommand{\hlgreen}[1]{{\sethlcolor{lightgreen}\hl{#1}}}
\newcommand{\hlblue}[1]{{\sethlcolor{lightblue}\hl{#1}}}
\DeclarePairedDelimiterX{\infdivx}[2]{(}{)}{#1\;\delimsize\|\;#2}
\newcommand{\infdiv}{D_{\mathsf{KL}}\infdivx}
\def\defn{\,\coloneqq\,}
\renewcommand{\vec}[1]{\bm{#1}} 
\newcommand{\I}{{\vec{I}}}
\newcommand{\Q}{{\vec{Q}}}  
\newcommand{\A}{{\vec{A}}}
\def\x{\vec{x}}  
\def\y{\vec{y}}  
\def\z{\vec{z}}  
\def\e{\vec{e}}  
\def\s{\vec{s}}  
\def\n{\vec{n}}  
\def\zhat{\widehat{\vec{z}}}
\def\d{\mathrm{d}}
\def\T{\mathcal{T}}
\def\E{\mathbb{E}}
\def\R{\mathbb{R}}
\def\Q{\mathbb{Q}}
\def\Qhat{\widehat{\Q}}
\def\qhat{\widehat{q}}
\title{Measurement Score-Based Diffusion Model}
\author{%
\normalsize Chicago~Y.~Park \quad Shirin Shoushtari \quad Hongyu An \quad Ulugbek S.~Kamilov\\[0.7em]
\small \textnormal{Washington University in St.\ Louis}\\[0.5em]
\footnotesize \texttt{\{chicago, s.shirin, hongyuan, kamilov\}@wustl.edu}
}
\begin{document}

\maketitle

\begin{abstract}
    Diffusion models are widely used in applications ranging from image generation to inverse problems. However, training diffusion models typically requires clean ground-truth images, which are unavailable in many applications. We introduce the \emph{Measurement Score-based diffusion Model (MSM)}, a novel framework that learns \emph{partial} measurement scores using \emph{only} noisy and subsampled measurements. MSM models the distribution of full measurements as an expectation over partial scores induced by randomized subsampling. To make the MSM representation computationally efficient, we also develop a stochastic sampling algorithm that generates full images by using a randomly selected subset of partial scores at each step. We additionally propose a new posterior sampling method for solving inverse problems that reconstructs images using these partial scores. We provide a theoretical analysis that bounds the Kullback–Leibler divergence between the distributions induced by full and stochastic sampling, establishing the accuracy of the proposed algorithm. We demonstrate the effectiveness of MSM on natural images and multi-coil MRI, showing that it can generate high-quality images and solve inverse problems---all without access to clean training data. Code is available at     \href{https://github.com/wustl-cig/MSM}{\textcolor{blue}{\texttt{https://github.com/wustl-cig/MSM}}}.
\end{abstract}

\section{Introduction}

Score-based diffusion models are powerful generative methods that enable sampling from high-dimensional distributions by learning the score function---%
the gradient of the log-density---%
from training data.
They achieve state-of-the-art performance in generating natural images~\cite{dhariwal2021beat}, medical images~\cite{khader2023ddpm3dmedical}, and more, as reviewed in~\cite{chung2024reviewofdiffusion}. Beyond generation, diffusion models can be adapted for conditional sampling given noisy measurements, making them effective for solving inverse problems. However, training requires a large set of clean data, which is often costly or difficult to obtain, such as from hardware limits for high-resolution images or long MRI scan times that cause patient discomfort.

To overcome the limitations of requiring clean data, recent approaches have explored training diffusion models using subsampled~\cite{daras2024ambientdiffusion}, noisy~\cite{xiang2023ddm2, aali2023surescore, daras2024ambienttweedie}, or jointly subsampled and noisy observations~\cite{kawar2023gsure}. These methods aim to approximate the score function of the clean image distribution using degraded measurements. However, these approaches overlook that measurements themselves can be useful, as measurements may either preserve exact, uncorrupted portions of the full measurements or contain only noise in those portions.
To date, no prior work has attempted to directly learn \emph{partial measurement scores} from noisy, subsampled data as a means of modeling the full measurement distribution---which can, in turn, be mapped to the clean image domain.

We introduce the \emph{Measurement Score-based diffusion Model (MSM)}, which learns denoising score functions restricted to observable regions of noisy and subsampled measurements. By taking an expectation over these partial scores---induced by randomized subsampling---MSM effectively models the prior and posterior distribution of full measurements. In the following sections, we present our method, provide theoretical justification, and validate its effectiveness in both unconditional generation and inverse problem solving on natural images and multi-coil MRI data.

\section{Background}
\label{sec:background}

\subsection{Score-Based Diffusion Models}
\label{sec:background_diffusion}

Score-based diffusion models~\cite{song2019generative, ho_NEURIPS2020_ddpm, song2021sde, park2024randomwalks} learn the score function using neural networks.
Tweedie’s formula~\cite{efron2011tweedie} relates the score function to the minimum mean square error (MMSE) denoiser, allowing the score to be estimated using only noisy inputs and their corresponding denoised outputs. Learning score function is performed across varying noise levels, by considering noisy images \(\x_{t} = \x + \sigma_t \n\), where \(\x\) is a clean image, \(\n \sim \mathcal{N}(\vec{0}, \vec{I})\), and \(\sigma_t\) is the noise level at the timestep~\(t\).

Given a denoiser \( \mathsf{D}_\theta \) trained to minimize the mean squared error (MSE), Tweedie’s formula approximates the score function as
$\nabla \log p_{\sigma_t}(\x_t) = \left(\mathsf{D}_{\theta}(\x_t) - \x_t\right)/\sigma_t^2$.
This relationship allows denoisers to serve as practical estimators of the score function at varying noise levels, providing the gradient necessary to drive sampling via reverse-time stochastic processes~\cite{robbins1956empirical, miyasawa1961empirical, vincent2011connection}.

Sampling then proceeds through a sequence of random walks~\cite{park2024randomwalks, ho_NEURIPS2020_ddpm, song2021sde} as
\begin{equation}
\label{eq:randomworks}
    \x_{t-1} = \x_{t} + \tau_t \nabla \log p_{\sigma_t}(\x_t) + \sqrt{2\tau_t\mathcal{T}_t}\n, \quad t = T, T-1, \ldots, 1,
\end{equation}
where $\sigma_t$, $\tau_t$, and $\mathcal{T}_t$ denote the noise-level, step-size, and temperature parameters.
These parameters can be derived from theoretical frameworks~\cite{ho_NEURIPS2020_ddpm, song2021sde} or tuned empirically~\cite{park2024randomwalks}, and the initial sampling iterate \(\x_T\) is drawn from a standard Gaussian to be consistent with the training input of the denoiser.

\subsection{Training Diffusion Models without Clean Data}
\label{sec:background_train_diff}

Training diffusion models to learn the score of clean images typically requires access to high-quality, clean data. However, in many applications, data is often subsampled, noisy, or both.

\textbf{Training with noiseless but subsampled measurements.} Ambient diffusion~\cite{daras2024ambientdiffusion, aali2025ambientposterior} is a recent method for training diffusion models from subsampled measurements by applying an additional subsampling operation during training.
At each step, the model receives a further subsampled and noise-perturbed input and learns to reconstruct the original subsampled measurement.
This procedure jointly encourages denoising and inpainting, guiding the model to approximate the conditional expectation of the clean image given a noisy, partially observed input.

\textbf{Training with noisy but fully-sampled measurements.} SURE-score~\cite{aali2023surescore} enables training score-based diffusion models without access to clean data by leveraging Stein’s unbiased risk estimator (SURE)~\cite{stein1981sureloss}.
SURE-score trains the diffusion model using only noisy measurements, combining two loss functions: a SURE-based loss for denoising the measurements and a denoising loss for the diffusion noise added on top of the denoised estimate.
Another approach~\cite{daras2024ambienttweedie} considers two regimes based on the relationship between measurement noise and diffusion noise.
When measurement noise is relatively smaller, the method estimates the clean image using Tweedie's formula after predicting the noisy measurement from the diffusion iterate.
Conversely, if measurement noise exceeds diffusion noise, the model is trained with a consistency loss~\cite{daras2024consistencyloss}, which encourages stable denoising outputs across nearby timesteps by enforcing that predictions remain consistent along the model’s reverse trajectory.

\textbf{Training with noisy and subsampled measurements.} GSURE diffusion~\cite{kawar2023gsure} trains diffusion models using only noisy, subsampled data by adapting the Generalized SURE loss~\cite{eldar2008generalizedSUREgsure} to the diffusion setting. It reformulates the training objective as a projected loss computable without clean images.
While this objective function is theoretically shown to be equivalent to the supervised diffusion loss under the assumption that the sampling mask and the denoising error are independent, GSURE diffusion has two limitations: it does not support multi-coil MRI, and it requires the minimum diffusion noise level \(\sigma_0\) to match the measurement noise level \(\rho\).
The latter can severely degrade sampling performance when \(\rho\) exceeds typical value of $\sigma_0$ (e.g., $\sigma_0 = 0.01$ in \cite{ho_NEURIPS2020_ddpm, dhariwal2021beat}).
Other recent methods~\cite{bai2024emdiff1, bai2025emdiff2} address the same setting by alternating between reconstructing clean images using diffusion priors pretrained on limited clean data and refining the model to learn from noisy, subsampled measurements.

\subsection{Imaging Inverse Problems}
\label{sec:background_inv_problems}

Inverse problems aim to recover an unknown image \(\x \in \mathbb{R}^p\) from noisy, undersampled measurements \(\y \in \mathbb{R}^m\), modeled as 
\begin{equation}
\label{eq:inverse_problem}
\y = \vec{A}\x + \vec{e}, 
\end{equation}
where \( \vec{A} \in \mathbb{R}^{m \times p} \) is a known forward operator and \( \vec{e} \sim \mathcal{N}(\vec{0}, \eta \vec{I}) \) is Gaussian noise with variance \( \eta \).

When clean training data is unavailable, \emph{self-supervised learning} is often used for training neural networks directly on degraded measurements, without the need for clean ground-truth data~\cite{chen2021equivariant, chen2022robustequivariant, Hu2024spicer, yaman2020ssdu, millard2023ssdupartitioned}. For example, SSDU~\cite{yaman2020ssdu, millard2023ssdupartitioned} is a widely-used approach for training end-to-end restoration networks using distinct subsets of the measurements. In the context of score-based diffusion models, Ambient Diffusion Posterior Sampling (A-DPS)~\cite{aali2025ambientposterior} replacing the clean-image-trained model in the popular DPS method~\cite{chung2023dps} with an Ambient diffusion model trained on subsampled measurements.

\begin{figure*}[t]
\begin{center}
\includegraphics[width=1.\textwidth]{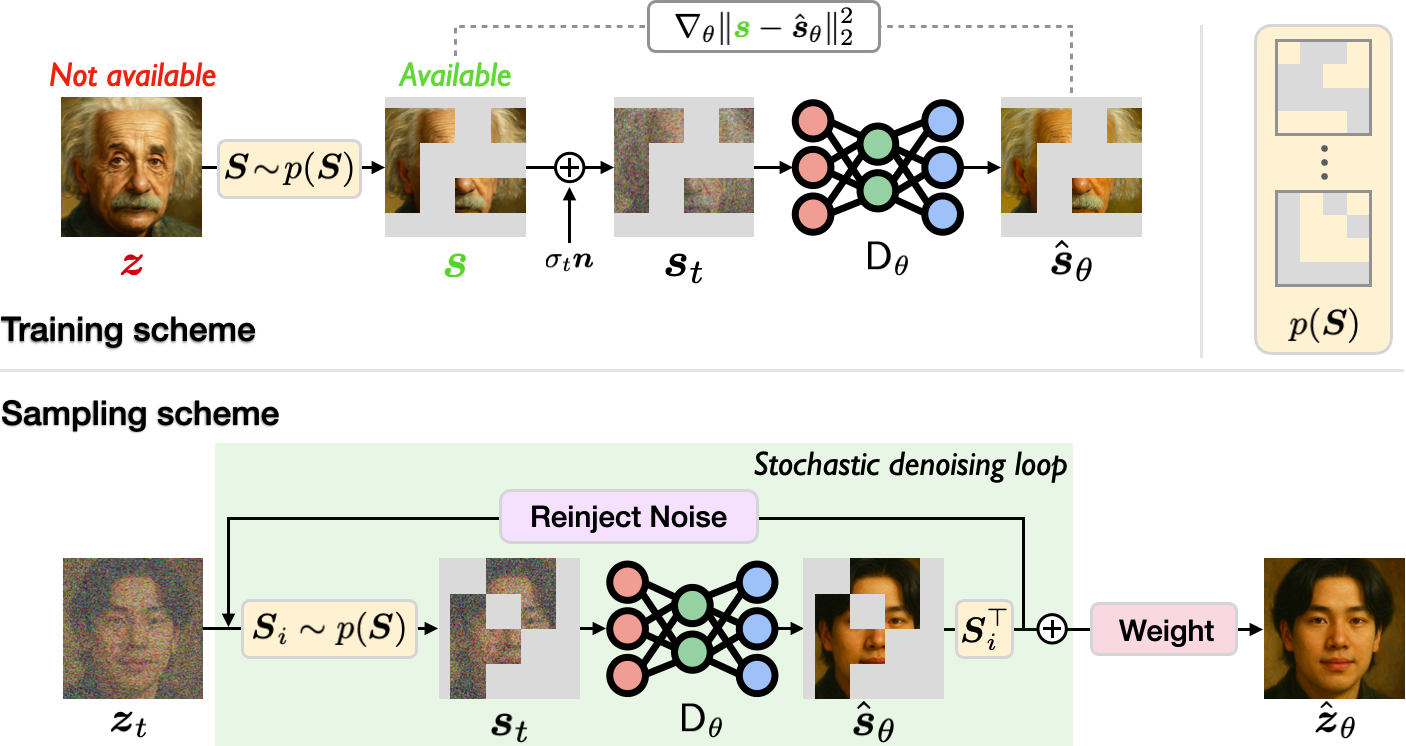}
\end{center}
\caption{Illustration of the \emph{Measurement Score-based diffusion Model (MSM)} for training and sampling using subsampled data. \textbf{Training:} MSM is trained solely on degraded measurements. Diffusion noise is added to these measurements, and the model learns to denoise them. \textbf{Sampling:} At each diffusion step, MSM randomly subsamples the current full-measurement iterate, denoises the resulting partial measurement, and aggregates multiple outputs. A weighting vector compensates for overlapping contributions across partial measurements. See Figure~\ref{fig:illustration_mri} for the MRI-specific version.}

\vspace{-.3cm}
\label{fig:illustration}
\end{figure*}

\section{Measurement Score-Based Diffusion Model}
\label{sec:method}

We present the training and sampling procedures for MSM. A key feature of MSM is that it operates solely on subsampled measurements during training. This enables learning partial measurement scores without access to clean ground-truth data and naturally extends self-supervised denoising to the challenging setting of noisy, subsampled measurements. We also introduce a conditional sampling algorithm that uses the pretrained MSM to solve inverse problems.

\subsection{Learning Partial Measurement Scores}
\label{subsec:learning_measurement_scores}

We first consider the setting where MSM has access only to partial, noiseless observations of an unknown fully-sampled measurement $\z \in \R^n$
\[
    \s = \vec{S}\z \in \R^m,
\]
where \( \vec{S} \in \{0,1\}^{m \times n} \), with $m < n$, is a subsampling mask drawn from distribution \( p(\vec{S}) \). We assume that in the absence of noise, the fully-sampled measurement $\z$ uniquely determines the underlying image $\x$. The definition of \( \z \) depends on the application: for inpainting we set \( \z = \x \), where $\x \in \R^p$ is the clean image; for MRI, we set \( \z = \vec{F}\vec{C}\x \), where $\vec{F}$ is the Fourier transform and $\vec{C}$ is coil-sensitivity operator~\cite{fessler2020optimizationMRIreview}. More broadly, we suppose that fully-sampled measurements are of form $\z = \vec{T}\x$, where $\vec{T} \in \mathbb{R}^{n \times p}$ ($p \geq n$) is an invertible transformation. It is important to note that MSM is designed to operate solely on partial measurements $\s = \vec{S}\z$, with $\vec{S} \sim p(\vec{S})$, without access to any fully-sampled $\z$ or clean image $\x$.

We define a forward diffusion process that adds Gaussian noise to the subsampled measurement $\s$
\[
    \s_t = \s + \sigma_t\n, \quad \n \sim \mathcal{N}(\vec{0}, \vec{I}),
\]
where \( t \in \{1, \dots, T\} \), \( \s_0 = \s \), and \( \s_T \) approaches a known distribution such as a standard Gaussian. At each step \( t \), the diffusion model receives \( \s_t \) as input and returns a denoised estimate of the subsampled measurement $\hat{\s}_{\theta}\in\mathbb{R}^{m}$ as
\begin{equation}\label{eq:denoised_sub_measurement}
\hat{\s}_{\theta}(\s_t \,;\, \sigma_t) = \mathsf{D}_{\theta}(\s_t \,;\, \sigma_t),
\end{equation}
where \( \mathsf{D}_{\theta} \) denotes the noise-conditioned (i.e., time-conditioned) diffusion model.
The model is trained by minimizing the mean squared error (MSE) loss between the predicted and true subsampled measurements:
\[
\mathcal{L}(\theta) = \mathbb{E}_{\s_t, t} \left[\| \s - \hat{\s}_{\theta}(\s_t \,; \sigma_t) \|_2^2 \right].
\]

Once trained, the partial measurement score function can be approximated using Tweedie’s formula~\cite{efron2011tweedie}, which estimates the gradient of the log-probability of the measurement iterate:
\begin{equation} \label{eq:measurement_score}
     \mathsf{S}_{\theta}(\s_t\,;\,\sigma_t)= \frac{1}{\sigma^2_t}(\hat{\s}_{\theta}(\s_t \,;\, \sigma_t) - \s_t),
\end{equation}
where \( \mathsf{S}_{\theta} \in \mathbb{R}^{m}\) denotes the learned partial measurement score. Illustrations for training are provided in Figure~\ref{fig:illustration} and Figure~\ref{fig:illustration_mri}.

\subsection{Unconditional Sampling with MSM} \label{subsec:measurementdiff_prior_sampling}

Implementing a diffusion model on the fully-sampled measurement requires access to the score function \( \nabla \log p_{\sigma_t}(\z_t) \), where \( \z_t \) denotes the noisy version of the fully-sampled measurement \( \z \).
Instead, we train our model to approximate the partial measurement score $\nabla \log p_{\sigma_t}(\s_t)$.

The goal of MSM sampling is to generate a fully-sampled measurement $\z$ given the partial measurement scores in \eqref{eq:measurement_score} for $\vec{S} \sim p(\vec{S})$. To that end, we define the \emph{MSM score} as the expectation over all possible partial scores
\begin{equation} \label{eq:defined_score}
\nabla \log q_{\sigma_t}(\vec{z}_t) \defn \vec{W} \, \mathbb{E}_{\vec{S}\sim p(\vec{S})} \left[ \vec{S}^\top \nabla \log p_{\sigma_t}(\vec{s}_t) \Big|_{\vec{s}_t = \vec{S} \vec{z}_t} \right]
\end{equation}
where each subsampling operator \( \vec{S} \in \mathbb{R}^{m \times n} \) is drawn from distribution \( p(\vec{S}) \), and \( \vec{W} \in \mathbb{R}^n \) is a weighting vector that compensates for overlapping contributions across sampling masks. It is defined as the reciprocal of the expected total coverage:
\begin{equation}
\label{eq:weightvector_expectation}
    \vec{W} \defn \left[ \max \left( \mathbb{E}_{\vec{S} \sim p(\vec{S})} \left[ \mathrm{diag}(\vec{S}^\top \vec{S}) \right],\, 1 \right) \right]^{-1},
\end{equation}
where the maximum is applied elementwise to avoid division by zero in regions not covered by any subsampled measurement.

To efficiently approximate the expectation in~\eqref{eq:defined_score}, we propose a stochastic sampling algorithm that uses a randomly selected subset of partial scores.
Specifically, we stochastically sample \( w \) sampling masks \( \vec{S}^{(i)} \sim p(\vec{S}) \) for \( i = 1, \dots, w \), where \( \vec{S}^{(i)} \in \mathbb{R}^{m_i \times n} \) denotes a subsampling operator $i$.
The corresponding subsampled measurements is obtained as \( \s_t^{(i)} = \vec{S}^{(i)} \z_t \). 
An unbiased estimator of the MSM score is then defined as
\begin{equation} \label{eq:ensemble}
    \nabla \log \widehat{q}_{\sigma_t}(\vec{z}_t) \defn \vec{W} \left[\frac{1}{w} \sum_{i=1}^{w} \vec{S}^{(i)\top} \nabla \log p_{\sigma_t}(\s_t^{(i)})\Big|_{\vec{s}_t^{(i)} = \vec{S}^{(i)} \vec{z}_t}\right],
\end{equation}
where each transpose operator \( \vec{S}^{(i)\top} \in \mathbb{R}^{n \times m_i} \) maps the partial score from the subsampled measurement space back to the fully-sampled measurement space. Note that the reweighting vector \( \vec{W} \) in \eqref{eq:weightvector_expectation} can be empirically estimated as: $\vec{W} \leftarrow \left[\max \left( \sum\nolimits_{i=1}^{w} \mathrm{diag}\left(\vec{S}^{(i)\top} \vec{S}^{(i)} \right),\, 1 \right)\right]^{-1}.$

The estimate of the fully-sampled measurement can be obtained by using the MMSE estimates for subsampled measurements as
\begin{equation} \label{eq:stochastic_denoising}
    \hat{\z}_{\theta} = \vec{W}  \sum_{i=1}^{w} \vec{S}^{(i)\top} \hat{\s}_{\theta}(\s^{(i)}_t\,;\, \sigma_t),
\end{equation}
where $\hat{\s}_{\theta}$ is defined in~\eqref{eq:denoised_sub_measurement}. 

The full sampling procedure is outlined in Algorithm~\ref{alg:kspace_diffusion_sampling}, with illustrative examples shown in Figure~\ref{fig:illustration} and Figure~\ref{fig:illustration_mri}.

\begin{algorithm}[t]
\setstretch{1.5}
\caption{Measurement Score-Based Sampling}
\begin{algorithmic}[1]
\Require $T$, $p(\vec{S}), \{\sigma_t\}^{T}_{t=1}$
\State Initialize $\z_T \sim \mathcal{N}(\vec{0}, \vec{I})$, $\hat{\z}_{\theta} \leftarrow \vec{0}$
\For{$t = T$ \textbf{to} $1$}
    \For{$i = 1$ \textbf{to} $w$}
        \begin{tikzpicture}[remember picture,overlay]
        \node[xshift=3.25cm,yshift=-0.73cm] at (0,0){%
        \includegraphics[width=\textwidth]{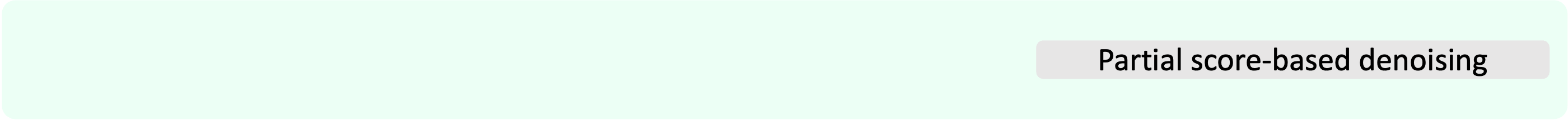}};
        \end{tikzpicture}
        \State $\vec{S}^{(i)} \sim p(\vec{S})$, \; $\s_t^{(i)} \leftarrow \vec{S}^{(i)} \z_t$
        \State $\hat{\s}_{\theta}^{(i)} \leftarrow \s_t^{(i)} + \sigma_t^2 \, \mathsf{S}_{\theta}(\s_t^{(i)}; \sigma_t)$
        \begin{tikzpicture}[remember picture,overlay]
        \node[xshift=1.12cm,yshift=-0.72cm] at (0,0){%
        \includegraphics[width=\textwidth]{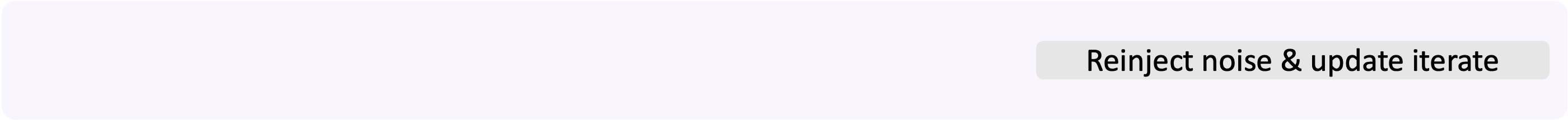}};
        \end{tikzpicture}
        \State $\s_t^{(i)} \sim p(\s_{t}^{(i)} \mid \hat{\s}^{(i)}_{\theta})$
        \State $\z_t \leftarrow \vec{S}^{(i)\top} \s_t^{(i)} + (\vec{I}-\vec{S}^{(i)\top}\vec{S}^{(i)}) \z_t$
    \EndFor
    \begin{tikzpicture}[remember picture,overlay]
    \node[xshift=4.74cm,yshift=-0.75cm] at (0,0){%
    \includegraphics[width=\textwidth]{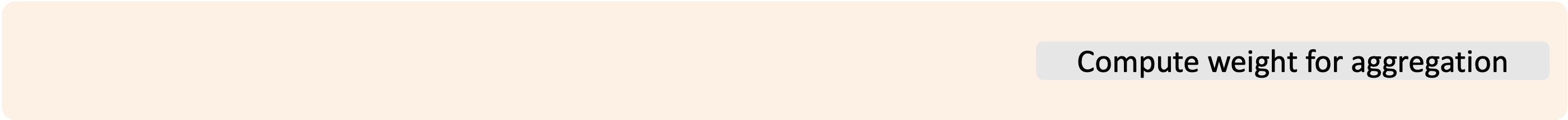}};
    \end{tikzpicture}
    \State $\vec{C} \leftarrow \sum_{i=1}^{w} \mathrm{diag}(\vec{S}^{(i)\top} \vec{S}^{(i)})$ 
    \State $\vec{W} \leftarrow [\max(\vec{C}, 1)]^{-1}$
    \begin{tikzpicture}[remember picture,overlay]
    \node[xshift=2.60cm,yshift=-0.51cm] at (0,0){%
    \includegraphics[width=\textwidth]{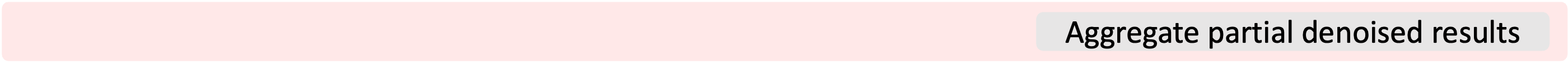}};
    \end{tikzpicture}
    \State $\hat{\z}_{\theta} \leftarrow \vec{W} \sum_{i=1}^{w} \vec{S}^{(i)\top} \hat{\s}_{\theta}^{(i)} + \mathbbm{1}_{\vec{C}=0} \cdot \hat{\z}_{\theta}$ 
    \State $\z_{t-1} \sim p(\z_{t-1} \mid \hat{\z}_{\theta})$
\EndFor
\State \textbf{return} $\z_0$
\end{algorithmic}
\label{alg:kspace_diffusion_sampling}
\end{algorithm}

\subsection{Posterior Sampling with MSM} \label{subsec:measurementdiff_posterior_sampling}

We extend MSM to sample from the posterior distribution for solving inverse problems of the form~\eqref{eq:inverse_problem}.
We consider measurement operators $\A$ of the form $\A = \vec{HT}$, where \( \vec{H} \in \mathbb{R}^{m \times n} \) is the degradation operator---%
such as downsampling, blurring, box inpainting, or random projection---%
and $\vec{T} \in \mathbb{R}^{n \times p}$ is an invertible transformation introduced in Section~\ref{subsec:learning_measurement_scores}.
This allows us to express the measurement model as \(\y = \vec{H} \z + \e\), where \(\z \in \mathbb{R}^n\) is the unknown fully-sampled measurement and \(\e \sim \mathcal{N}(\vec{0}, \eta\vec{I})\) is Gaussian noise with variance of  $\eta$.

To estimate the posterior score, we combine the stochastic score estimate from~\eqref{eq:stochastic_denoising} with its corresponding fully-sampled prediction $\hat{\z}_{\theta}$. The posterior score is approximated as:
\[
    \begin{aligned}
    \nabla \log p_{\sigma_t}(\z_t \mid \y) 
    &= \nabla \log p_{\sigma_t}(\vec{z}_t) + \nabla \log p_{\sigma_t}(\y \mid \z_t) \\
    &\approx \nabla \log \widehat{q}_{\sigma_t}(\vec{z}_t) + \nabla \log p_{\sigma_t}(\y \mid \hat{\z}_{\theta}),
    \end{aligned}
\]
where the log-likelihood gradient is given by
\begin{equation} \label{eq:mdiff_likelihood_approximation}
    \nabla\log p_{\sigma_t}(\y \mid \hat{\z}_\theta) 
    = \gamma_t \nabla \left\| \y - \vec{H} \hat{\z}_\theta \right\|_2^2,
\end{equation}
with \(\gamma_t\) denoting the step-size parameter. Note that \(\vec{H}\) may differ from the randomized subsampling operators \(\vec{S} \sim p(\vec{S})\) used during MSM training. For example, \(\vec{H}\) can be a blurring followed by downsampling in super-resolution, or a fixed box inpainting mask. We solve the inverse problem by replacing the score function in the random walk sampling process~\eqref{eq:randomworks} with our posterior approximation. This approach avoids the need for automatic differentiation when computing the likelihood gradient, thereby reducing computational overhead. A related posterior sampling strategy, using diffusion models trained on clean data, was proposed in~\cite{wang2022ddnm}. Further simplification of our method for compressed-sensing MRI are presented in Appendix~\ref{appendix:mri_posterior_sampling}.

\subsection{Learning Partial Measurement Score from Noisy and Subsampled Measurements}

We now show how our MSM framework can be extended to train directly on noisy and subsampled measurements by integrating with self-supervised denoising methods to address the noise on the subsampled measurements.

We formulate our observed measurement as %
\[
\s = \vec{S}\z + \vec{\nu}, \quad \vec{\nu} \sim \mathcal{N}(\vec{0}, \rho\I),
\]
where $\rho$ is the measurement noise level, and the remaining notations follow Section~\ref{subsec:learning_measurement_scores}.
We first define the sequence of diffusion noise level $\{\sigma_t\}^{T}_{t=1}$.
At each time step $t$, we compare the diffusion noise level $\sigma_t$ with the measurement noise level $\rho$, and apply one of the following strategies accordingly:

\textbf{Case 1: \(\sigma_t > \rho\).}  
We add residual noise to match the diffusion level:
\[
\s_t \leftarrow \s + \sqrt{\sigma_t^2 - \rho^2} \n.
\]
The training objective is:
\[
\begin{aligned}
\mathcal{L}(\theta) &= \mathbb{E}_{\s_t, t, \vec{S}} \left[ \left\| \s - \mathbb{E}[\s \mid \s_t] \right\|_2^2 \right] + \mathcal{L}_{\text{SURE}}(\theta\,; \s, \rho) \\
&= \mathbb{E}_{\s_t, t, \vec{S}} \left[  \left\| \s - \left( \frac{\sigma_t^2 - \rho^2}{\sigma_t^2}(\hat{\s}_{\theta}(\s_t ; \sigma_t) - \s_t) + \s_t \right) \right\|_2^2 \right] + \mathcal{L}_{\text{SURE}}(\theta\,; \s, \rho),
\end{aligned}
\]
where the first term is inspired by~\cite{daras2024ambienttweedie}, which shows that a noisier image can be denoised using a less noisy reference; we apply this idea to subsampled measurements.
The second term is the SURE loss, following~\cite[Equation 9]{chen2022robustequivariant}, which enables the model to learn to denoise measurement noise and plays a key role in the next case.

\textbf{Case 2: $\sigma_t \leq \rho$.} We first denoise 
$\s$ using the MSM with the noise conditioned of $\rho$, then add diffusion noise as
\[
\s_t \leftarrow \hat{\s}_{\theta}(\s \, ; \rho) + \sigma_t\n.
\]
Training minimizes the discrepancy within the non-subsampled region:
\[
\mathcal{L}(\theta) = \mathbb{E}_{\vec{\epsilon}, \s_t, t, \vec{S}}[\hat{\s}_{\theta}(\s \, ; \rho) - \hat{\s}_{\theta}(\s_t \, ; \sigma_t) \|_2^2] + \mathcal{L}_{\text{SURE}}(\theta\,; \s,\rho).
\]
Here, $\hat{\s}_{\theta}(\s ; \rho)$ serves as a pseudo-clean reference. Its quality is crucial but improves naturally during training, since the same prediction is refined in \textbf{Case 1}. In practice, most $\sigma_t$ are larger than $\rho$, making \textbf{Case 1} more frequently sampled. As a result, the pseudo-clean reference used in \textbf{Case 2} is continuously improved, ensuring stable training across both cases.

\section{Theoretical Analysis}
\label{sec:theory}

We have introduced the MSM framework, which can generate fully-sampled measurements using \textit{partial measurement scores}.
To make the expectation of the MSM score~\eqref{eq:defined_score} efficient, our algorithm approximates it using a minibatch of \(w\) sampled operators in~\eqref{eq:ensemble}, where \(\vec{S}^{(1)}, \dots, \vec{S}^{(w)}\) are sampled independently and identically from the distribution \(p(\vec{S})\). This implies that for a fixed $\vec{W}$, we have an unbiased estimator of the MSM score:
\[
\E \left[ \nabla \log \widehat{q}_{\sigma_t}(\z_t) \right] = \nabla \log q_{\sigma_t}(\z_t),
\]
where the expectation is over the randomness in the sampled minibatch.

\begin{assumption}
\label{As:BoundedVariance}
There exists \(v > 0\) such that for all \(\z \in \mathbb{R}^n\),
\[
\E\left[\|\nabla \log q_{\sigma_t}(\z) - \nabla \log \widehat{q}_{\sigma_t}(\z)\|_2^2\right] \leq \frac{v^2}{w},
\]
where the expectation is taken over $\vec{S}^{(i)} \sim p(\vec{S})$.
\end{assumption}
This assumption implies that the gradient estimate of MSM score has a bounded variance, an assumption commonly adopted in stochastic and online algorithms~\cite{Ghadimi.Lan2016, liu2022online, welling2011bayesian}.

\begin{theorem}\label{thm:thm1}
Let \(q(\z)\) and \(\widehat{q}(\z)\) denote the distributions of samples generated by using the MSM score \(\nabla\log q_{\sigma_t}(\z_t)\) and its stochastic approximation \(\nabla\log \widehat{q}_{\sigma_t}(\z_t)\), respectively.  
Under Assumption~\ref{As:BoundedVariance}, the KL divergence between the two distributions is bounded as
\[
    \infdiv{q(\z)}{\widehat{q}(\z)} \leq \frac{v^2}{w} C,
\]
where \(C\) is a finite constant independent of \(w\).
\end{theorem}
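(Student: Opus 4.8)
The plan is to treat the two samplers as discrete-time Markov chains in $\z$ that share the same Gaussian initialization $\z_T\sim\mathcal{N}(\vec{0},\vec{I})$ and the same random-walk update \eqref{eq:randomworks}, differing only in which score is substituted. Write $P_t(\z_{t-1}\mid\z_t)$ for the transition kernel driven by the exact MSM score $\nabla\log q_{\sigma_t}$ and $\widehat{P}_t(\z_{t-1}\mid\z_t)$ for the kernel driven by the stochastic estimate $\nabla\log\widehat{q}_{\sigma_t}$. Since the injected noise $\sqrt{2\tau_t\mathcal{T}_t}\,\n$ is common to both, each kernel is Gaussian with the same covariance $2\tau_t\mathcal{T}_t\vec{I}$, and their means differ by exactly $\tau_t\Delta_t$, where $\Delta_t\defn\nabla\log q_{\sigma_t}(\z_t)-\nabla\log\widehat{q}_{\sigma_t}(\z_t)$. (In continuous time this is the Girsanov picture, where the drift gap alone controls the path-measure KL; I would carry out the discrete analog to match \eqref{eq:randomworks}.)

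First I would reduce the marginal KL on $\z_0$ to a path KL. The data-processing inequality gives $\infdiv{q(\z)}{\widehat{q}(\z)}\le\infdiv{q_{T:0}}{\widehat{q}_{T:0}}$, since marginalizing the full trajectory down to the terminal iterate cannot increase divergence. The chain rule for KL of Markov chains with identical initialization then yields $\infdiv{q_{T:0}}{\widehat{q}_{T:0}}=\sum_{t}\E_{\z_t\sim q_t}\big[\infdiv{P_t(\cdot\mid\z_t)}{\widehat{P}_t(\cdot\mid\z_t)}\big]$, with the $t=T$ contribution vanishing because $\infdiv{q_T}{\widehat{q}_T}=0$. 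Here I would note that freshly and independently drawing the minibatch of masks at each step keeps the $\z$-only process Markov after the masks are marginalized, so the chain rule applies.

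Next I would bound each per-step term. For two Gaussians with shared covariance, the KL reduces to a scaled squared-mean gap, $\infdiv{P_t}{\widehat{P}_t}=\|\tau_t\Delta_t\|_2^2/(4\tau_t\mathcal{T}_t)=\tau_t\|\Delta_t\|_2^2/(4\mathcal{T}_t)$. The one wrinkle is that $\widehat{P}_t$ is really a mixture over the random masks $\vec{S}^{(i)}\sim p(\vec{S})$, hence not a single Gaussian; I would resolve this through joint convexity of KL in its second argument, so that $\infdiv{P_t}{\widehat{P}_t}=\infdiv{P_t}{\E_{\vec{S}}[\widehat{P}_{t,\vec{S}}]}\le\E_{\vec{S}}\big[\tau_t\|\Delta_t\|_2^2/(4\mathcal{T}_t)\big]$. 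Applying Assumption~\ref{As:BoundedVariance}, which holds pointwise in $\z$, gives $\E_{\vec{S}}\|\Delta_t\|_2^2\le v^2/w$ uniformly, and this uniform bound survives the outer expectation over $\z_t\sim q_t$. Summing over $t$ then produces $\infdiv{q(\z)}{\widehat{q}(\z)}\le\frac{v^2}{w}\sum_{t=1}^{T}\frac{\tau_t}{4\mathcal{T}_t}$, establishing the claim with $C\defn\sum_{t=1}^{T}\tau_t/(4\mathcal{T}_t)$, a finite constant fixed by the schedule and independent of $w$.

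I expect the main obstacle to be exactly the mixture-kernel issue: because the stochastic score makes $\widehat{P}_t$ a genuine mixture rather than a Gaussian, one must justify the convexity/Jensen step and confirm that the uniform-in-$\z$ form of Assumption~\ref{As:BoundedVariance} is precisely what lets the $v^2/w$ bound pass through both the mask expectation and the trajectory expectation without accumulating $\z$-dependence. A secondary point to verify is that the chain rule is applied consistently with the backward time indexing of \eqref{eq:randomworks}, and that $\mathcal{T}_t>0$ for all $t$ so that $C$ is well defined and finite.
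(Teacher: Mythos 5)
Your proof is correct, but it follows a genuinely different technical route from the paper's. The paper works in continuous time: it models the two samplers as reverse-time SDEs, invokes Girsanov's theorem (with a check of Novikov's condition) to write the Radon--Nikodym derivative between the induced path measures, eliminates the stochastic-integral term via unbiasedness of the minibatch score, and bounds the remaining quadratic drift-gap term by Assumption~\ref{As:BoundedVariance}, yielding $C=\int_0^T 1/(4\mathcal{T}_t)\,\d t$; it then transfers the path-level bound to the terminal marginals by a data-processing lemma, exactly as you do. You instead stay in discrete time, matching the actual update \eqref{eq:randomworks}: the chain rule for Markov kernels with common Gaussian initialization replaces Girsanov, the closed-form KL between equal-covariance Gaussians replaces the stochastic-integral computation, and joint convexity of KL handles the fact that the stochastic kernel is a mixture over minibatches of masks; your constant $C=\sum_{t}\tau_t/(4\mathcal{T}_t)$ is the Riemann-sum analogue of the paper's integral. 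Your route buys three things: it analyzes the algorithm as implemented (no appeal to the continuous-time idealization or to discretization consistency), it requires no Novikov-type integrability check, and the mixture/convexity step treats the mask randomness more explicitly than the paper's ``law of iterated expectations'' inside the Girsanov exponent---arguably the most delicate point of the paper's argument, since there the randomness of the drift lives outside the Brownian filtration. The paper's route, in exchange, gives a discretization-free statement that connects directly to the SDE-based diffusion analyses it cites. One point worth making explicit in your write-up: the per-step bound must survive the outer expectation under the exact chain's marginal $q_t$, which works here precisely because Assumption~\ref{As:BoundedVariance} is uniform in $\z$, as you correctly note.
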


The proof is provided in Appendix~\ref{app:proof}. Theorem~\ref{thm:thm1} quantifies the impact of using a stochastic measurement score during sampling by establishing a theoretical guarantee on the proximity of the resulting distributions. Specifically, it bounds the KL divergence between the distribution \(q(\z)\), induced by sampling with the exact MSM score, and the distribution \(\widehat{q}(\z)\), obtained using the stochastic approximation. Under Assumption~\ref{As:BoundedVariance}, the KL divergence is upper bounded by $\frac{v^2}{w} C$, where $C$ is a finite constant independent of the minibatch size. This result highlights a trade-off between computational efficiency and sampling accuracy, showing that the approximation error—measured by the KL divergence—decreases with larger minibatch size $w$.

\begin{table*}
\begin{minipage}[t]{0.49\textwidth} %
    \setstretch{1.5} %
    \centering
    \scriptsize
    \caption{\small FID scores for unconditional image samples under different training scenarios on human face images.
    \hlgreen{\textbf{Best values}} are highlighted for each training scenario, with comparisons shown when corresponding baseline methods are available. Note how MSM consistently achieves substantially lower FID scores than alternative methods across the evaluated settings.}
    \vspace{0.15cm}
    \renewcommand{\arraystretch}{1.1}
    \begin{tabular}{@{}p{2.0cm}p{3.5cm}p{0.5cm}p{0.5cm}@{}p{0.2cm}@{}p{0.5cm}@{}}
    \toprule
    \noalign{\vskip -.8ex}
    \textbf{Training data} & \multicolumn{1}{c}{\textbf{Methods}} & \multicolumn{1}{c}{\textbf{FID}$\downarrow$}  \\
    \cmidrule{1-3}  \noalign{\vskip -0.8ex}
    No degradation & \multicolumn{1}{c}{Oracle diffusion} & \multicolumn{1}{c}{$16.76$}   \\ \cdashline{1-3} 
    \multirow{2}{*}{$p = 0.4, \eta = 0$} & \multicolumn{1}{c}{MSM} & \multicolumn{1}{c}{\hlgreen{$\bm{32.54}$}}  \\
      & \multicolumn{1}{c}{Ambient diffusion} & \multicolumn{1}{c}{$61.46$}  \\ \cdashline{1-3} 
    \multirow{2}{*}{$p = 0.4, \eta = 0.1$} & \multicolumn{1}{c}{MSM} & \multicolumn{1}{c}{\hlgreen{$\bm{48.70}$}}   \\
      & \multicolumn{1}{c}{GSURE diffusion} & \multicolumn{1}{c}{$94.88$}  \\
    \bottomrule
    \end{tabular}
    \label{table:face_fid}
\end{minipage}
\hfill %
\begin{minipage}[t]{0.49\textwidth} %
    \setstretch{1.5} %
    \centering
    \scriptsize
    \caption{\small FID scores for unconditional image samples under different training scenarios on multi-coil brain MR images.
    \hlgreen{\textbf{Best values}} are highlighted for each training scenario, with comparisons shown when corresponding baseline methods are available. Note how MSM achieves a lower FID compared to the alternative methods.}
    \vspace{0.15cm}
    \renewcommand{\arraystretch}{1.5}
    \begin{tabular}{@{}p{2.0cm}p{3.5cm}p{0.5cm}p{0.5cm}@{}p{0.2cm}@{}p{0.5cm}p{0.5cm}@{}}
    \toprule
    \noalign{\vskip -1.67ex}
    \textbf{Training data} & \multicolumn{1}{c}{\textbf{Methods}} & \multicolumn{1}{c}{\textbf{FID}$\downarrow$}  \\ \noalign{\vskip -.65ex}
    \cmidrule{1-3} \noalign{\vskip -1.5ex}
    No degradation & \multicolumn{1}{c}{Oracle diffusion} & \multicolumn{1}{c}{$29.25$}  \\ \cdashline{1-3}  \noalign{\vskip -.64ex}
    \multirow{2}{*}{$R = 4, \eta = 0$} & \multicolumn{1}{c}{MSM} & \multicolumn{1}{c}{\hlgreen{$\bm{43.60}$}}  \\
      & \multicolumn{1}{c}{Ambient diffusion} & \multicolumn{1}{c}{$47.80$}  \\ \cdashline{1-3}  \noalign{\vskip -.64ex}
    $R = 4, \eta = 0.1$ & \multicolumn{1}{c}{MSM} & \multicolumn{1}{c}{$79.78$} \\  
    \bottomrule
    \end{tabular}
    \label{table:fastmri_fid}
\end{minipage}
\end{table*}

\begin{figure*}[t]
\begin{center}
\includegraphics[width=1.0\textwidth]{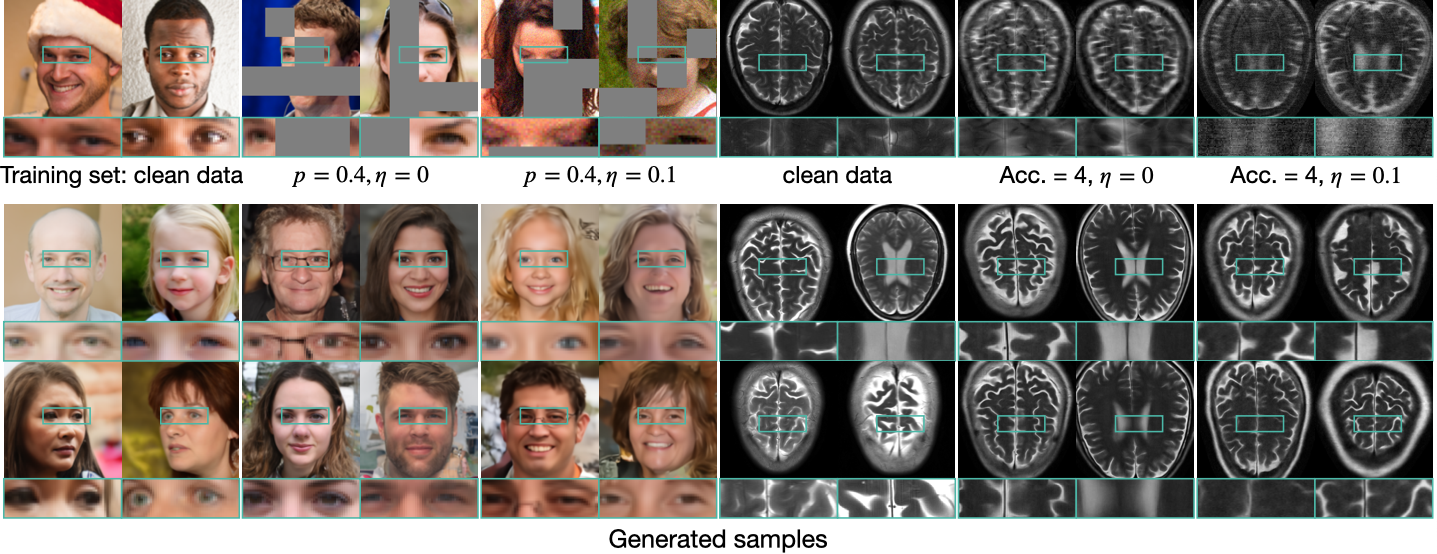}
\end{center}
\caption{Generated samples from MSM trained under three degradation settings (first row: training data; second row: samples generated by models trained on the corresponding data). Note how despite never seeing ground-truth data, MSM can generated high-quality images.}
\vspace{-.3cm}
\label{fig:visual_comparison}
\end{figure*}

\section{Numerical Evaluations}
\label{sec:numericalevaluations}

We evaluated MSM on unconditional generation and conditional sampling for natural images and multi-coil MRI.  All models used the same diffusion architecture~\cite{dhariwal2021beat} and were trained from scratch on a single NVIDIA A100 GPU for 1,000,000 iterations (see Appendix~\ref{appendix:training_details} for architectural details). 
Natural image experiments used 69,000 FFHQ face images ($128\times128$ RGB), and MRI experiments used 2,000 center-cropped T2-weighted slices ($256\times256$, complex-valued) from the fastMRI dataset~\cite{zbontar2018fastmri1, knoll2020fastmri2}, which was curated under Institutional Review Board (IRB) approval and deidentified prior to release.
We performed inverse problem evaluations on 100 test images per domain.

\subsection{RGB Face Image Experiment}
\label{subsec:rgb_experiment}

\textbf{Training data.}  
We considered two training settings: (1) subsampling only and (2) subsampling with added Gaussian noise \(\eta = 0.1\).  
In both cases, 40\% of pixels were masked using \(32 \times 32\) patches, applied identically across RGB channels (see Figure~\ref{fig:visual_comparison} for an example).  
Full training details of our method and baselines are provided in Appendix~\ref{appendix:training_details} and Appendix~\ref{appendix:baselines_for_uncond}.

\textbf{Unconditional sampling.}  
MSM used a stochastic loop parameter \(w = 3\) with 200 sampling steps.
We compared MSM to three baselines: an oracle diffusion model trained on clean images, Ambient diffusion~\cite{daras2024ambientdiffusion} trained on noiseless masked inputs, and GSURE diffusion~\cite{kawar2023gsure} trained on noisy masked inputs.
All baselines used 200-step accelerated sampling~\cite{song2021ddim}.
As shown in Table~\ref{table:face_fid}, MSM achieves better FID scores than all baselines trained without clean data, evaluated over 3,000 generated samples.
Figure~\ref{fig:visual_comparison} further shows that MSM generates clean images despite being trained without clean data.
Additional results showing how \(w\) influences sampling quality and time efficiency are provided in Appendix~\ref{appendix:effect_of_stochastic_loop}.

\begin{table*}
\begin{minipage}[t]{0.45\textwidth} %
    \setstretch{1.5} %
    \centering
    \scriptsize
    \caption{\small Quantitative results on two natural image inverse problems comparing methods using diffusion priors trained without clean images. \hlgreen{\textbf{Best values}} are highlighted per metric. MSM achieves the best performance across both distortion-based and perception-oriented metrics.}
    \vspace{0.15cm}
    \renewcommand{\arraystretch}{.9}
    \begin{tabular}{@{}p{0.9cm}p{.1cm}p{.1cm}p{.1cm}p{.1cm}@{}}
    \toprule
    \noalign{\vskip -.8ex}
\textbf{Setup} &  & \multicolumn{1}{c}{\text{Input}} & \multicolumn{1}{c}{\text{A-DPS}} & \multicolumn{1}{c}{\text{MSM}} \\
    \cmidrule{1-5}\\ \noalign{\vskip -3.5ex}
    \multirow{3}{*}{{\textbf{Inpainting}}} & \multicolumn{1}{c}{PSNR$\uparrow$} & \multicolumn{1}{c}{$18.26$} & \multicolumn{1}{c}{$20.14$} & \multicolumn{1}{c}{{\hlgreen{$\bm{24.71}$}}}  \\[+.95ex]
     & \multicolumn{1}{c}{SSIM$\uparrow$} & \multicolumn{1}{c}{$0.749$} & \multicolumn{1}{c}{$0.621$} & \multicolumn{1}{c}{{\hlgreen{$\bm{0.867}$}}}   \\[+.75ex]
      & \multicolumn{1}{c}{LPIPS$\downarrow$} & \multicolumn{1}{c}{$0.304$} & \multicolumn{1}{c}{$0.305$} & \multicolumn{1}{c}{{\hlgreen{$\bm{0.076}$}}}  \\[+.5ex] 
      \cdashline{1-5} \\ \noalign{\vskip -2.5ex} 
      \multirow{3}{*}{{\textbf{SR} ($\times 4$)}} & \multicolumn{1}{c}{PSNR$\uparrow$} & \multicolumn{1}{c}{$23.21$} & \multicolumn{1}{c}{$22.61$} & \multicolumn{1}{c}{{\hlgreen{$\bm{28.11}$}}}   \\[+.95ex]
     & \multicolumn{1}{c}{SSIM$\uparrow$} & \multicolumn{1}{c}{$0.728$} & \multicolumn{1}{c}{$0.702$} & \multicolumn{1}{c}{{\hlgreen{$\bm{0.868}$}}}   \\[+.75ex]
      & \multicolumn{1}{c}{LPIPS$\downarrow$} & \multicolumn{1}{c}{$0.459$} & \multicolumn{1}{c}{$0.277$} & \multicolumn{1}{c}{{\hlgreen{$\bm{0.117}$}}} \\[+.5ex] 
      \bottomrule
    \end{tabular}
    \label{table:quantitative_comparison_FFHQinvproblems}
\end{minipage}
\hfill %
\begin{minipage}[t]{0.52\textwidth} %
    \setstretch{1.5} %
    \centering
    \scriptsize
    \caption{\small Quantitative results on multi-coil compressed sensing MRI comparing diffusion-based and self-supervised methods, all trained without clean data. \hlgreen{\textbf{Best values}} are highlighted per metric. MSM outperforms the baselines in both PSNR and LPIPS, including the restoration-specific baseline SSDU.}
    \vspace{0.15cm}
    \renewcommand{\arraystretch}{.9}
    \begin{tabular}{@{}p{1.3cm}p{.1cm}p{.1cm}p{.1cm}p{.1cm}p{.1cm}@{}}
    \toprule
    \noalign{\vskip -.8ex}
\textbf{Setup} &  & \multicolumn{1}{c}{\text{Input}} & \multicolumn{1}{c}{\text{A-DPS}} & \multicolumn{1}{c}{\text{SSDU}} & \multicolumn{1}{c}{\text{MSM}} \\
    \cmidrule{1-6}\\ \noalign{\vskip -3.5ex}
        \multirow{3}{*}{{\textbf{CS-MRI} ($\times 4$)}} & \multicolumn{1}{c}{PSNR$\uparrow$} & \multicolumn{1}{c}{$22.75$} & \multicolumn{1}{c}{$27.28$}  & \multicolumn{1}{c}{$29.65$} & \multicolumn{1}{c}{{\hlgreen{$\bm{30.71}$}}}   \\[+.95ex]
     & \multicolumn{1}{c}{SSIM$\uparrow$} & \multicolumn{1}{c}{$0.648$} & \multicolumn{1}{c}{$0.804$} & \multicolumn{1}{c}{{\hlgreen{$\bm{0.847}$}}} & \multicolumn{1}{c}{$0.839$}   \\[+.75ex]
      & \multicolumn{1}{c}{LPIPS$\downarrow$} & \multicolumn{1}{c}{$0.306$} & \multicolumn{1}{c}{$0.173$} & \multicolumn{1}{c}{$0.160$} & \multicolumn{1}{c}{{\hlgreen{$\bm{0.145}$}}} \\[+.5ex] 
      \cdashline{1-6} \\ \noalign{\vskip -2.5ex}
    \multirow{3}{*}{{\textbf{CS-MRI} ($\times 6$)}} & \multicolumn{1}{c}{PSNR$\uparrow$} & \multicolumn{1}{c}{$21.94$} & \multicolumn{1}{c}{$26.29$} & \multicolumn{1}{c}{$28.02$} & \multicolumn{1}{c}{{\hlgreen{$\bm{28.86}$}}}  \\[+.95ex]
     & \multicolumn{1}{c}{SSIM$\uparrow$} & \multicolumn{1}{c}{$0.617$} & \multicolumn{1}{c}{$0.763$} & \multicolumn{1}{c}{{\hlgreen{$\bm{0.820}$}}} & \multicolumn{1}{c}{$0.805$}   \\[+.75ex]
      & \multicolumn{1}{c}{LPIPS$\downarrow$} & \multicolumn{1}{c}{$0.342$} & \multicolumn{1}{c}{$0.201$} & \multicolumn{1}{c}{$0.186$} & \multicolumn{1}{c}{{\hlgreen{$\bm{0.168}$}}}  \\[+.5ex]  
      \bottomrule
    \end{tabular}
    \label{table:quantitative_comparison_MRIinvproblems}
\end{minipage}
\end{table*}

\begin{figure*}[t]
\vspace{-.cm}
\begin{center}
\includegraphics[width=1.0\textwidth]{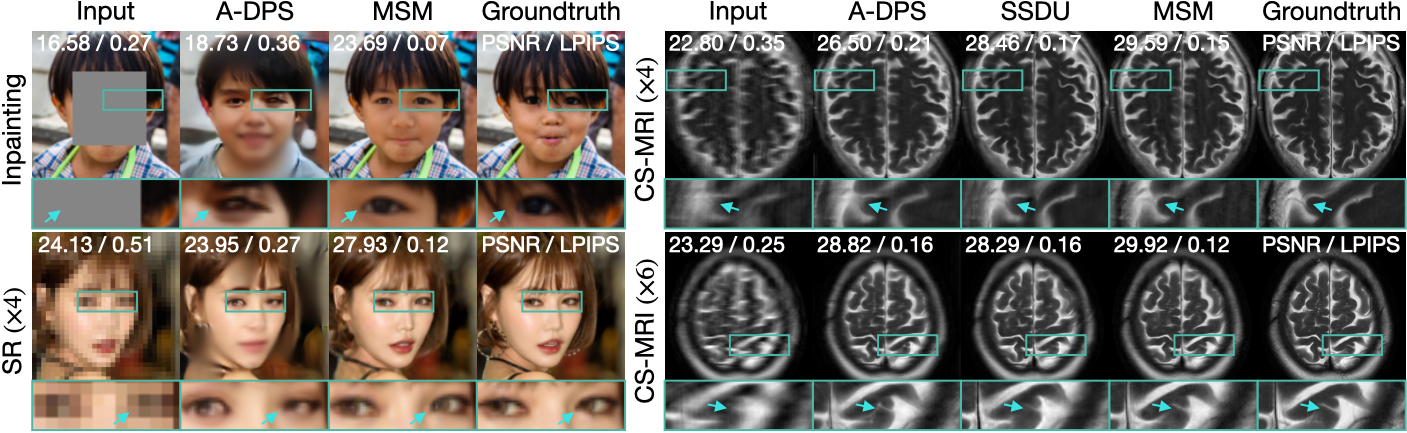}
\end{center}
\caption{Visual comparison of methods trained on subsampled data for inverse problems. Note how MSM leads to the best results in both applications.}
\vspace{-.3cm}
\label{fig:visual_comparison_invproblems}
\end{figure*}

\textbf{Conditional sampling.}  
We compared MSM with A-DPS~\cite{aali2025ambientposterior} on two inverse problems: box inpainting with a \(64 \times 64\) missing region and \(4\times\) bicubic super-resolution.  
Both methods used models trained on the same noiseless 40\% masked data without retraining.  
A-DPS took 1000 steps, while MSM took 200 steps with \(w = 3\), and step sizes in \eqref{eq:mdiff_likelihood_approximation} set to \(\gamma_t = 1.75\) for both inpainting and super-resolution.  
As shown in Table~\ref{table:quantitative_comparison_FFHQinvproblems} and Figure~\ref{fig:visual_comparison_invproblems}, MSM outperforms A-DPS in PSNR, SSIM, and LPIPS.  
We also observe that A-DPS performs worse than the input image in PSNR and SSIM, likely due to limitations of the Ambient diffusion prior in generating fine details when trained on non-sparse subsampling patterns—%
such as box masks—%
rather than more localized patterns (e.g., dust masks) mainly used in the original training setup of Ambient diffusion~\cite{daras2024ambientdiffusion}.
Detailed hyperparameter setups for A-DPS are provided in Appendix~\ref{appendix:baselines_for_cond}.
Additional experiments showing that MSM achieves comparable performance to clean-data-trained diffusion-based inverse problem solvers are provided in Appendix~\ref{app:comparison_inv_wth_DIS}.
Results using an MSM prior trained on noisy and subsampled data are provided in Appendix~\ref{appendix:solve_inv_wth_noisy_mdiff}.

\subsection{Multi-Coil MRI Experiment}
\label{subsec:mri_experiment}

\textbf{Training data.}  
We considered two training settings: (1) subsampling only and (2) subsampling with added Gaussian noise \(\eta = 0.1\).  
We applied a subsampling operation in k-space using random masks with an acceleration rate of \(R = 4\), while fully sampling all vertical lines and the central 20 lines for autocalibration.  
Training configuration details for our method and baselines are provided in Appendix~\ref{appendix:training_details} and Appendix~\ref{appendix:baselines_for_uncond}.

\textbf{Unconditional sampling.} MSM used a stochastic loop parameter of \(w = 2\) with 200 sampling steps (see Appendix~\ref{appendix:effect_of_stochastic_loop} for how \(w\) affects sampling quality).
We compared it against an oracle diffusion model trained on clean images and Ambient diffusion~\cite{daras2024ambientdiffusion} trained on noiseless subsampled inputs; GSURE diffusion~\cite{kawar2023gsure} is omitted due to incompatibility with multi-coil data.
All baselines used accelerated sampling with 200 steps~\cite{song2021ddim}.
As shown in Table~\ref{table:fastmri_fid}, MSM achieves better FID scores than Ambient diffusion, based on 3,000 generated samples.
Figure~\ref{fig:visual_comparison} further demonstrates that MSM generates realistic images, even when trained on noisy and subsampled measurements.
FID computation details are provided in Appendix~\ref{app:fid}, and generation results under extreme subsampling are shown in Appendix~\ref{appendix:extreme}.

\textbf{Conditional sampling.}  
We evaluated accelerated MRI reconstruction using random masks with acceleration rates \(R = 4\) and \(R = 6\), and measurement noise \(\eta = 0.01\).  
We used stochastic posterior sampling algorithm introduced in Appendix~\ref{appendix:mri_posterior_sampling} with MSM pretrained on noiseless \(R = 4\) measurements without retraining, with step size \(\gamma_t = 2\) in \eqref{eq:mdiff_subsampling_likelihood_approximation}.  
We compared against two baselines trained on the same subsampled data: a diffusion-based method (A-DPS~\cite{aali2025ambientposterior}) and a self-supervised end-to-end method (Robust SSDU~\cite{millard2024robustssdu}).  
A-DPS used 1000 steps, SSDU performed a single forward pass, and MSM used 200 steps with \(w = 3\).  
As shown in Table~\ref{table:quantitative_comparison_MRIinvproblems} and Figure~\ref{fig:visual_comparison_invproblems}, MSM outperforms both baselines in PSNR and LPIPS.
Detailed hyperparameter setups for A-DPS and Robust SSDU are provided in Appendix~\ref{appendix:baselines_for_cond}.
Comparisons with clean-data-trained diffusion-based inverse problem solvers are in Appendix~\ref{app:comparison_inv_wth_DIS}, and additional results using an MSM prior trained on noisy and subsampled MRI data are in Appendix~\ref{appendix:solve_inv_wth_noisy_mdiff}.

\section{Conclusion}
\label{sec:Conclusion}

We introduced \emph{Measurement Score-based diffusion Model (MSM)}, a framework that can model the full measurement distribution using score functions learned solely from noisy, subsampled measurements.
The key idea is the MSM score, defined as an expectation over partial scores induced by randomized subsampling.
We develop a stochastic sampling algorithm for both prior and posterior inference that efficiently approximates this expectation, enabling clean image generation and inverse problem solving.
We demonstrate that MSM achieves state-of-the-art performance among diffusion-based methods trained without clean data, for both unconditional image generation and conditional sampling in inverse problems.
The framework applies broadly to settings where only subsampled measurements are available but collectively cover the full data space, making it valuable for generative modeling in limited-data regimes and high-dimensional sampling from low-dimensional observations.

{
\small

\bibliographystyle{IEEEbib}
\bibliography{refs}

\begin{thebibliography}{10}

\bibitem{dhariwal2021beat}
Prafulla Dhariwal and Alexander Nichol,
\newblock ``Diffusion models beat {GAN}s on image synthesis,''
\newblock in {\em Advances in Neural Information Processing Systems}, 2021, vol.~34, pp. 8780--8794.

\bibitem{khader2023ddpm3dmedical}
Firas Khader, Gustav M{\"u}ller-Franzes, Soroosh Tayebi~Arasteh, Tianyu Han, Christoph Haarburger, Maximilian Schulze-Hagen, Philipp Schad, Sandy Engelhardt, Bettina Bae{\ss}ler, Sebastian Foersch, Johannes Stegmaier, Christiane Kuhl, Sven Nebelung, Jakob~Nikolas Kather, and Daniel Truhn,
\newblock ``Denoising diffusion probabilistic models for 3d medical image generation,''
\newblock {\em Scientific Reports}, vol. 13, no. 1, pp. 7303, 2023.

\bibitem{chung2024reviewofdiffusion}
Hyungjin Chung, Hyelin Nam, and Jong~Chul Ye,
\newblock ``Review of diffusion models: Theory and applications,''
\newblock {\em Journal of the Korean Society for Industrial and Applied Mathematics}, vol. 28, no. 1, pp. 1--21, 2024.

\bibitem{daras2024ambientdiffusion}
Giannis Daras, Kulin Shah, Yuval Dagan, Aravind Gollakota, Alexandros~G. Dimakis, and Adam Klivans,
\newblock ``Ambient diffusion: Learning clean distributions from corrupted data,''
\newblock {\em Advances in Neural Information Processing Systems}, vol. 36, 2024.

\bibitem{xiang2023ddm2}
Tiange Xiang, Mahmut Yurt, Ali~B Syed, Kawin Setsompop, and Akshay Chaudhari,
\newblock ``Ddm $^ 2$: Self-supervised diffusion mri denoising with generative diffusion models,''
\newblock in {\em The Eleventh International Conference on Learning Representations}, 2023.

\bibitem{aali2023surescore}
Asad Aali, Marius Arvinte, Sidharth Kumar, and Jonathan~I. Tamir,
\newblock ``Solving inverse problems with score-based generative priors learned from noisy data,''
\newblock in {\em 2023 57th Asilomar Conference on Signals, Systems, and Computers}. IEEE, 2023, pp. 837--843.

\bibitem{daras2024ambienttweedie}
Giannis Daras, Alexandros~G. Dimakis, and Constantinos Daskalakis,
\newblock ``Consistent diffusion meets tweedie: Training exact ambient diffusion models with noisy data,''
\newblock in {\em Proceedings of the 41st International Conference on Machine Learning (ICML)}, 2024.

\bibitem{kawar2023gsure}
Bahjat Kawar, Noam Elata, Tomer Michaeli, and Michael Elad,
\newblock ``Gsure-based diffusion model training with corrupted data,''
\newblock {\em Transactions on Machine Learning Research}, 2024.

\bibitem{song2019generative}
Yang Song and Stefano Ermon,
\newblock ``Generative modeling by estimating gradients of the data distribution,''
\newblock in {\em Advances in Neural Information Processing Systems}, 2019, vol.~32.

\bibitem{ho_NEURIPS2020_ddpm}
Jonathan Ho, Ajay Jain, and Pieter Abbeel,
\newblock ``Denoising diffusion probabilistic models,''
\newblock in {\em Advances in Neural Information Processing Systems}, 2020, vol.~33, pp. 6840--6851.

\bibitem{song2021sde}
Yang Song, Jascha Sohl-Dickstein, Diederik~P. Kingma, Abhishek Kumar, Stefano Ermon, and Ben Poole,
\newblock ``Score-based generative modeling through stochastic differential equations,''
\newblock in {\em The Ninth International Conference on Learning Representations}, 2021.

\bibitem{park2024randomwalks}
Chicago~Y. Park, Michael~T. McCann, Cristina Garcia-Cardona, Brendt Wohlberg, and Ulugbek~S. Kamilov,
\newblock ``Random walks with tweedie: A unified framework for diffusion models,''
\newblock {\em arXiv preprint arXiv:2411.18702}, 2024.

\bibitem{efron2011tweedie}
Bradley Efron,
\newblock ``Tweedie’s formula and selection bias,''
\newblock {\em Journal of the American Statistical Association}, vol. 106, no. 496, pp. 1602--1614, 2011.

\bibitem{robbins1956empirical}
Herbert Robbins,
\newblock ``An empirical bayes approach to statistics,''
\newblock in {\em Proceedings of the Third Berkeley Symposium on Mathematical Statistics and Probability, Volume 1: Contributions to the Theory of Statistics}. 1956, pp. 157--163, The Regents of the University of California.

\bibitem{miyasawa1961empirical}
Koichi Miyasawa,
\newblock ``An empirical bayes estimator of the mean of a normal population,''
\newblock {\em Bulletin of the International Statistical Institute}, vol. 38, pp. 181--188, 1961.

\bibitem{vincent2011connection}
Pascal Vincent,
\newblock ``A connection between score matching and denoising autoencoders,''
\newblock {\em Neural computation}, vol. 23, no. 7, pp. 1661--1674, 2011.

\bibitem{aali2025ambientposterior}
Asad Aali, Giannis Daras, Brett Levac, Sidharth Kumar, Alexandros~G. Dimakis, and Jonathan~I. Tamir,
\newblock ``Ambient diffusion posterior sampling: Solving inverse problems with diffusion models trained on corrupted data,''
\newblock in {\em The Thirteenth International Conference on Learning Representations}, 2025.

\bibitem{stein1981sureloss}
Charles~M. Stein,
\newblock ``Estimation of the mean of a multivariate normal distribution,''
\newblock {\em The Annals of Statistics}, pp. 1135--1151, 1981.

\bibitem{daras2024consistencyloss}
Giannis Daras, Yuval Dagan, Alexandros~G. Dimakis, and Constantinos Daskalakis,
\newblock ``Consistent diffusion models: Mitigating sampling drift by learning to be consistent,''
\newblock {\em Advances in Neural Information Processing Systems}, vol. 36, 2024.

\bibitem{eldar2008generalizedSUREgsure}
Yonina~C. Eldar,
\newblock ``Generalized sure for exponential families: Applications to regularization,''
\newblock {\em IEEE Transactions on Signal Processing}, vol. 57, no. 2, pp. 471--481, 2008.

\bibitem{bai2024emdiff1}
Weimin Bai, Yifei Wang, Wenzheng Chen, and He~Sun,
\newblock ``An expectation-maximization algorithm for training clean diffusion models from corrupted observations,''
\newblock in {\em Thirty-Eighth Annual Conference on Neural Information Processing Systems}, 2024.

\bibitem{bai2025emdiff2}
Weimin Bai, Weiheng Tang, Enze Ye, Siyi Chen, Wenzheng Chen, and He~Sun,
\newblock ``Learning diffusion model from noisy measurement using principled expectation-maximization method,''
\newblock in {\em ICASSP 2025--2025 IEEE International Conference on Acoustics, Speech and Signal Processing (ICASSP)}. IEEE, 2025, pp. 1--5.

\bibitem{chen2021equivariant}
Dongdong Chen, Juli{\'a}n Tachella, and Mike~E Davies,
\newblock ``Equivariant imaging: Learning beyond the range space,''
\newblock in {\em Proceedings of the IEEE/CVF International Conference on Computer Vision}, October 2021, pp. 4379--4388.

\bibitem{chen2022robustequivariant}
Dongdong Chen, Juli{\'a}n Tachella, and Mike~E Davies,
\newblock ``Robust equivariant imaging: a fully unsupervised framework for learning to image from noisy and partial measurements,''
\newblock in {\em Proceedings of the IEEE/CVF Conference on Computer Vision and Pattern Recognition}, 2022.

\bibitem{Hu2024spicer}
Yuyang Hu, Weijie Gan, Chunwei Ying, Tongyao Wang, Cihat Eldeniz, Jiaming Liu, Yasheng Chen, Hongyu An, and Ulugbek~S. Kamilov,
\newblock ``Spicer: Self-supervised learning for mri with automatic coil sensitivity estimation and reconstruction,''
\newblock {\em Magnetic Resonance Medicine}, 2024,
\newblock DOI:10.1002/mrm.30121.

\bibitem{yaman2020ssdu}
Burhaneddin Yaman, Seyed Amir~Hossein Hosseini, Steen Moeller, Jutta Ellermann, Kamil Ugurbil, and Mehmet Akcakaya,
\newblock ``{Self-Supervised Learning of Physics-Guided Reconstruction Neural Networks without Fully-Sampled Reference Data},''
\newblock {\em Magnetic Resonance in Medicine}, vol. 84, no. 6, pp. 3172--3191, Dec 2020.

\bibitem{millard2023ssdupartitioned}
Charles Millard and Mark Chiew,
\newblock ``A theoretical framework for self-supervised mr image reconstruction using sub-sampling via variable density noisier2noise,''
\newblock {\em IEEE Transactions on Computational Imaging}, vol. 9, pp. 707--720, 2023.

\bibitem{chung2023dps}
Hyungjin Chung, Jeongsol Kim, Michael~Thompson Mccann, Marc~Louis Klasky, and Jong~Chul Ye,
\newblock ``Diffusion posterior sampling for general noisy inverse problems,''
\newblock in {\em The Eleventh International Conference on Learning Representations}, 2023.

\bibitem{fessler2020optimizationMRIreview}
Jeffrey~A. Fessler,
\newblock ``Optimization methods for magnetic resonance image reconstruction: Key models and optimization algorithms,''
\newblock {\em IEEE Signal Processing Magazine}, vol. 37, no. 1, pp. 33--40, 2020.

\bibitem{wang2022ddnm}
Yinhuai Wang, Jiwen Yu, and Jian Zhang,
\newblock ``Zero-shot image restoration using denoising diffusion null-space model,''
\newblock {\em The Eleventh International Conference on Learning Representations}, 2023.

\bibitem{Ghadimi.Lan2016}
Saeed Ghadimi and Guanghui Lan,
\newblock ``Accelerated gradient methods for nonconvex nonlinear and stochastic programming,''
\newblock {\em Mathematical Programming}, vol. 156, no. 1, pp. 59--99, 2016.

\bibitem{liu2022online}
Jiaming Liu, Xiaojian Xu, Weijie Gan, Ulugbek Kamilov, et~al.,
\newblock ``Online deep equilibrium learning for regularization by denoising,''
\newblock {\em Advances in Neural Information Processing Systems}, vol. 35, pp. 25363--25376, 2022.

\bibitem{welling2011bayesian}
Max Welling and Yee~W. Teh,
\newblock ``Bayesian learning via stochastic gradient langevin dynamics,''
\newblock in {\em Proceedings of the 28th International Conference on Machine Learning (ICML-11)}, 2011, pp. 681--688.

\bibitem{zbontar2018fastmri1}
Jure Zbontar, Florian Knoll, Anuroop Sriram, Tullie Murrell, Zhengnan Huang, Matthew~J. Muckley, Aaron Defazio, Ruben Stern, Patricia Johnson, Mary Bruno, Marc Parente, Krzysztof~J. Geras, Joe Katsnelson, Hersh Chandarana, Zizhao Zhang, Michal Drozdzal, Adriana Romero, Michael Rabbat, Pascal Vincent, Nafissa Yakubova, James Pinkerton, Duo Wang, Erich Owens, C.~Lawrence Zitnick, Michael~P. Recht, Daniel~K. Sodickson, and Yvonne~W. Lui,
\newblock ``fastmri: An open dataset and benchmarks for accelerated mri,''
\newblock {\em arXiv preprint arXiv:1811.08839}, 2018.

\bibitem{knoll2020fastmri2}
Florian Knoll, Jure Zbontar, Anuroop Sriram, Matthew~J. Muckley, Mary Bruno, Aaron Defazio, Marc Parente, Krzysztof~J. Geras, Joe Katsnelson, Hersh Chandarana, Zizhao Zhang, Michal Drozdzal, Adriana Romero, Michael Rabbat, Pascal Vincent, James Pinkerton, Duo Wang, Nafissa Yakubova, Erich Owens, C.~Lawrence Zitnick, Michael~P. Recht, Daniel~K. Sodickson, and Yvonne~W. Lui,
\newblock ``fastmri: A publicly available raw k-space and dicom dataset of knee images for accelerated mr image reconstruction using machine learning,''
\newblock {\em Radiology: Artificial Intelligence}, vol. 2, no. 1, pp. e190007, 2020.

\bibitem{song2021ddim}
Jiaming Song, Chenlin Meng, and Stefano Ermon,
\newblock ``Denoising diffusion implicit models,''
\newblock in {\em International Conference on Learning Representations}, 2021.

\bibitem{millard2024robustssdu}
Charles Millard and Mark Chiew,
\newblock ``Clean self-supervised mri reconstruction from noisy, sub-sampled training data with robust ssdu,''
\newblock {\em Bioengineering}, vol. 11, no. 12, pp. 1305, 2024.

\bibitem{chen2024accelerating}
Haoxuan Chen, Yinuo Ren, Lexing Ying, and Grant Rotskoff,
\newblock ``Accelerating diffusion models with parallel sampling: Inference at sub-linear time complexity,''
\newblock {\em Advances in Neural Information Processing Systems}, vol. 37, pp. 133661--133709, 2024.

\bibitem{baker2024condition}
Elizabeth~Louise Baker, Gefan Yang, Michael~L. Severinsen, Christy~Anna Hipsley, and Stefan Sommer,
\newblock ``Conditioning non-linear and infinite-dimensional diffusion processes,''
\newblock in {\em Advances in Neural Information Processing Systems}. 2024, vol.~37, pp. 10801--10826, Curran Associates, Inc.

\bibitem{huang2021a}
Chin-Wei Huang, Jae~Hyun Lim, and Aaron Courville,
\newblock ``A variational perspective on diffusion-based generative models and score matching,''
\newblock in {\em Advances in Neural Information Processing Systems}, A.~Beygelzimer, Y.~Dauphin, P.~Liang, and J.~Wortman Vaughan, Eds., 2021.

\bibitem{song2021maximum}
Yang Song, Conor Durkan, Iain Murray, and Stefano Ermon,
\newblock ``Maximum likelihood training of score-based diffusion models,''
\newblock in {\em Advances in Neural Information Processing Systems}, 2021, vol.~34.

\bibitem{oksendal2013stochastic}
Bernt Oksendal,
\newblock {\em Stochastic differential equations: an introduction with applications},
\newblock Springer Science \& Business Media, 2013.

\bibitem{ronneberger2015unet}
Olaf Ronneberger, Philipp Fischer, and Thomas Brox,
\newblock ``U-net: Convolutional networks for biomedical image segmentation,''
\newblock in {\em Medical Image Computing and Computer-Assisted Intervention--MICCAI 2015}. Springer, 2015, pp. 234--241.

\bibitem{loshchilov2019adamw}
Ilya Loshchilov and Frank Hutter,
\newblock ``Decoupled weight decay regularization,''
\newblock in {\em The Seventh International Conference on Learning Representations}, 2019.

\bibitem{Bendel2023rcGAN_fidfastmri}
Matthew~C. Bendel, Rizwan Ahmad, and Philip Schniter,
\newblock ``A regularized conditional gan for posterior sampling in image recovery problems,''
\newblock in {\em Advances in Neural Information Processing Systems}, 2023, vol.~36, pp. 68673--68684.

\end{thebibliography}
}

\newpage
\appendix

\section{Proofs}\label{app:proof}
\begin{suptheorem}
Let \(q(\z)\) and \(\widehat{q}(\z)\) denote the distributions of samples generated by using the MSM score \(\nabla\log q_{\sigma_t}(\z)\) and its stochastic approximation \(\nabla\log \widehat{q}_{\sigma_t}(\z)\), respectively.  
Under Assumption~\ref{As:BoundedVariance}, the KL divergence between the two distributions is bounded as
\begin{equation}
   \infdiv{q(\z)}{\widehat{q}(\z)} \leq \frac{v^2}{w} C,
\end{equation}
where \(C\) is a finite constant independent of \(w\).
\end{suptheorem}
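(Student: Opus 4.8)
The plan is to treat both samplers as time-inhomogeneous Markov chains governed by the random-walk recursion~\eqref{eq:randomworks}, started from the common initialization $\z_T \sim \mathcal{N}(\vec{0},\vec{I})$ and differing only in the drift term---the exact MSM score $\nabla\log q_{\sigma_t}$ versus its stochastic approximation $\nabla\log\widehat{q}_{\sigma_t}$. I would first bound the relative entropy between the two full trajectory laws $Q_{0:T}$ and $\widehat{Q}_{0:T}$ using the chain rule (tensorization) for KL divergence, and then pass to the marginals by the data-processing inequality, which guarantees $\infdiv{q(\z)}{\widehat{q}(\z)} \le \infdiv{Q_{0:T}}{\widehat{Q}_{0:T}}$ since the map from a trajectory to its terminal iterate $\z_0$ is measurable.

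\textbf{Per-step Gaussian divergence.} Conditioned on the current iterate $\z_t$, each transition of~\eqref{eq:randomworks} is Gaussian with common covariance $2\tau_t\mathcal{T}_t\vec{I}$ and means differing by $\tau_t(\nabla\log q_{\sigma_t}(\z_t) - \nabla\log\widehat{q}_{\sigma_t}(\z_t))$. Using the closed form for the KL between two Gaussians of equal covariance, I expect the per-step divergence to be
\[
\frac{\tau_t}{4\mathcal{T}_t}\,\bigl\|\nabla\log q_{\sigma_t}(\z_t) - \nabla\log\widehat{q}_{\sigma_t}(\z_t)\bigr\|_2^2 .
\]
Taking expectations and invoking Assumption~\ref{As:BoundedVariance}---crucially, because it holds uniformly over $\z\in\R^n$---would bound each term by $\tfrac{\tau_t}{4\mathcal{T}_t}\cdot\tfrac{v^2}{w}$, cleanly decoupling the expectation over the state $\z_t$ from the variance bound over the masks.

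\textbf{Main obstacle.} The delicate point is that $\widehat{q}_{\sigma_t}$ uses freshly drawn masks $\vec{S}^{(1)},\dots,\vec{S}^{(w)}$ at each step, so the \emph{marginal} transition kernel of the approximate chain is a Gaussian mixture rather than a single Gaussian, and the tidy per-step computation above does not apply to it verbatim. I would resolve this by augmenting the state to record the sampled masks: define both trajectory laws on the enlarged space carrying $(\z_t,\vec{S}^{(1:w)}_t)$, drawing the masks from the same $p(\vec{S})$ under both laws so that the mask-transition KL terms vanish and only the $\z$-transitions differ. On this augmented space each conditional $\z$-transition is genuinely Gaussian, so the chain rule reproduces exactly the per-step terms above; a final application of data processing---now projecting away both the intermediate iterates \emph{and} the masks---recovers the bound on $\infdiv{q(\z)}{\widehat{q}(\z)}$. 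Handling this mixture issue rigorously, rather than silently treating $\widehat{q}_{\sigma_t}$ as a fixed drift, is the step I expect to require the most care.

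\textbf{Conclusion.} Summing the per-step bounds over $t=1,\dots,T$ would yield
\[
\infdiv{q(\z)}{\widehat{q}(\z)} \le \frac{v^2}{w}\sum_{t=1}^{T}\frac{\tau_t}{4\mathcal{T}_t} = \frac{v^2}{w}\,C,
\]
with $C \defn \sum_{t=1}^{T}\tau_t/(4\mathcal{T}_t)$, a finite constant fixed by the step-size and temperature schedules and therefore independent of $w$. An alternative route with the same structure is to pass to the continuous-time limit and apply Girsanov's theorem to the two diffusions, whose Radon--Nikodym derivative produces the same integrated squared-drift discrepancy; I would nonetheless favor the discrete chain-rule argument, since it matches the sampler in~\eqref{eq:randomworks} exactly and avoids limiting arguments.
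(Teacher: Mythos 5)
Your proposal is correct, and it reaches the paper's bound by a genuinely different, more elementary route. The paper works in continuous time: it writes the two samplers as reverse-time SDEs driven by $\nabla\log q_{\sigma_t}$ and $\nabla\log\widehat{q}_{\sigma_t}$, invokes Girsanov's theorem (verifying Novikov's condition via Assumption~\ref{As:BoundedVariance}) to obtain the Radon--Nikodym derivative between the path measures, kills the It\^{o}-integral cross term using the martingale property together with unbiasedness of the stochastic score, bounds the remaining quadratic term by the variance assumption, and finally passes to the terminal marginals by the same chain-rule/data-processing lemmas you use; this yields $C = \int_0^T 1/(4\mathcal{T}_t)\,\d t$. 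Your discrete argument is the exact Riemann-sum counterpart: the per-step Gaussian KL $\tfrac{\tau_t}{4\mathcal{T}_t}\|\nabla\log q_{\sigma_t}(\z_t)-\nabla\log\widehat{q}_{\sigma_t}(\z_t)\|_2^2$ summed over $t$ gives $C=\sum_t \tau_t/(4\mathcal{T}_t)$, matching the paper's integral. What your route buys: (i) it requires no stochastic calculus and no Novikov verification, only the closed-form KL between equal-covariance Gaussians and tensorization; (ii) it analyzes the sampler of~\eqref{eq:randomworks} as actually implemented, whereas the paper nominally introduces Euler--Maruyama discretizations but then applies Girsanov to the continuous SDEs, leaving a discretization gap unaddressed; (iii) your state-augmentation device makes the per-step resampling of masks explicit and rigorous---the paper handles this point only implicitly, by inserting an iterated expectation over the masks inside the Girsanov computation, while still treating $\nabla\log\widehat{q}_{\sigma_t}$ as a single drift; and (iv) you never need unbiasedness of the estimator, only the mean-squared-error bound of Assumption~\ref{As:BoundedVariance}, whereas the paper invokes unbiasedness to zero out the stochastic integral (a step that is anyway automatic for It\^{o} integrals under integrability). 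What the paper's route buys is conformity with the standard convergence-analysis style in the diffusion literature and a constant expressed directly as a time integral, which is schedule-agnostic rather than tied to a particular step-size sequence $\{\tau_t\}$.
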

\begin{proof}
Our proof invokes Girsanov’s theorem, which characterizes how the distribution of a Brownian‑driven stochastic process transforms when we transition from one probability measure to another~\cite{chen2024accelerating,baker2024condition,huang2021a,song2021maximum}.

Consider the two  stochastic processes $\{\z (t)\}_{t\in[0,1]}$ and $\{\widehat{\z} (t)\}_{t\in[0,1]}$, corresponding to the Euler–Maruyama discretizations of the following reverse-time SDEs
\begin{align*}
        \d \z &= \nabla \log q_{\sigma_t}(\z) \d t + \sqrt{2\mathcal{T}_t} \d \overline{\vec{w}}_t\quad \z(T) = \z_T \sim \mathcal{N}(\vec{0}, \vec{I}),   \\
 \d \widehat{\z} &= \nabla \log \qhat_{\sigma_t}(\zhat) \d t  + \sqrt{2\mathcal{T}_t} \d \overline{\vec{w}}_t\quad \widehat{\z}(T) =  \widehat{\z}_T \sim \mathcal{N}(\vec{0}, \vec{I}),   \\
\end{align*}
Let $\Q [\z_{0:T}]$ and $\Qhat[\zhat_{0:T}]$ denote the path measures induced by the respective processes. The process $\{\z (t)\}_{t\in[0,1]}$  corresponds to the reverse diffusion trajectory driven by the true measurement score $\nabla \log q_{\sigma_t}(\z)$, while   $\{\widehat{\z} (t)\}_{t\in[0,1]}$ is generated by the reverse process using an approximate (stochastic) measurement score $\nabla \log \qhat_{\sigma_t}(\z)$. 

 By using the chain rule of KL divergence from Lemma~\ref{Lem:ChainRule}, we have 
\begin{equation}\label{eq:KLpathmeasures}
    \infdiv{\Q}{\Qhat} = \E_{\z_T\sim \mathcal{N}(\vec{0}, \vec{I}) }  \left [\infdiv{\Q(.|\z= \z_T)}{\Qhat(.|\zhat= \zhat_T)} \right ].
\end{equation}

Using the definition of KL divergence and the fact that $M_T = \d \Qhat/\d\Q$ from Lemma~\ref{lem:GirsavonTheorem}, we have 
\begin{align*}
&\infdiv{\Q(.|\z= \z_T)}{\Qhat(.|\zhat= \zhat_T)} =  -\E_\Q \left [ \log \frac{\d \Qhat}{\d \Q}\right] = -\E_\Q \left [ \log M_T\right]  \\
&~= \E_\Q\Bigg [ \int_0^T   \frac{\big [ \nabla \log q_{\sigma_t}(\z_t) - \nabla \log \qhat_{\sigma_t}(\z_t) \big] }{\sqrt{2\T_t} } \d \overline{\vec{w}}_t \quad + \frac{1}{2} \int_0^T  \frac{ \| \nabla \log q_{\sigma_t}(\z_t) - \nabla \log \qhat_{\sigma_t}(\z_t) \|_2^2 }{2\T_t } \d t
\Bigg ]\\ 
&~= \E_\Q\Bigg [ \int_0^T   \frac{\big [ \nabla \log q_{\sigma_t}(\z_t) - \nabla \log \qhat_{\sigma_t}(\z_t) \big] }{\sqrt{2\T_t} } \d \overline{\vec{w}}_t \Bigg ] + \E_\Q\Bigg [ \frac{1}{2} \int_0^T  \frac{ \| \nabla \log q_{\sigma_t}(\z_t) - \nabla \log \qhat_{\sigma_t}(\z_t) \|_2^2 }{2\T_t } \d t
\Bigg ]\\
&~= \E_\Q\Bigg [\int_0^T   \frac{1}{\sqrt{2\T_t}}~\E\Big[\nabla \log q_{\sigma_t}(\z_t)  - \frac{1}{w}\sum_{i=1}^{w}  \nabla  \log q_{\sigma_t}(\s_t^{(i)})\Big|_{\vec{s}_t^{(i)}= \vec{S}^{(i)} \vec{z}_t} \Big]\d \overline{\vec{w}}_t\Bigg ]\\
& ~+\E_\Q\Bigg [\int_0^T   \frac{1}{4\T_t} ~\E\Big[\Big\|\nabla \log q_{\sigma_t}(\z_t)   - \frac{1}{w}\sum_{i=1}^{w}   \nabla \log q_{\sigma_t}(\s_t^{(i)})\Big|_{\vec{s}_t^{(i)}= \vec{S}^{(i)} \vec{z}_t}\Big\|_2^2\Big]\d t\Bigg ]\\
& ~=\E_\Q\Bigg [\int_0^T   \frac{1}{4\T_t} ~\E\Big[\Big\|\nabla \log q_{\sigma_t}(\z_t) - \frac{1}{w}\sum_{i=1}^{w}   \nabla \log q_{\sigma_t}(\s_t^{(i)})\Big|_{\vec{s}_t^{(i)}= \vec{S}^{(i)} \vec{z}_t}\Big\|_2^2\Big]\d t\Bigg ] \leq \frac{v^2}{w} \int_0^T \frac{1}{4\T_t} \d t \leq \frac{v^2}{w} C,
\end{align*}
where $C \defn \int_0^T   1/(4\T_t) \d t $ is a finite constant. 
In the first line, we use the definition of KL divergence between $\Q$ and $\Qhat$ and the result from Lemma~\ref{lem:GirsavonTheorem}. In the third line, we use the law of iterated expectations over \( w \) sampling masks \( \vec{S}^{(i)} \sim p(\vec{S}) \) for \( i = 1, \dots, w \). Note that since $\E [\nabla \log \qhat_{\sigma_t}
(\z_t)]$ is an unbiased estimator of MSM score $\nabla \log q_{\sigma_t}(\z_t)$, we have $ \E [\nabla \log \qhat_{\sigma_t}
(\z_t)] = \nabla \log q_{\sigma_t}(\z_t) $, which yields the expectation in the forth line to be $0$. 
In the last line, we use the bounded variance in Assumption~\ref{As:BoundedVariance}.

Following this result with \eqref{eq:KLpathmeasures} and Lemma~\ref{Lem:inequalitymarginalized}, we have 
\begin{equation}
   \infdiv{q_0}{\qhat_0} \leq\infdiv{\Q}{\Qhat} \leq \frac{v^2}{w} C.
\end{equation}
\end{proof}

\begin{lemma}\label{lem:GirsavonTheorem}
\textbf{(The Girsanov Theorem III.)}
Let $\{\z(t)\}_{t = T}^0$ and $\{\zhat(t)\}_{t = T}^0$ be two It\^{o} process of the forms
\begin{align*}
        \d \z &= \nabla \log q_{\sigma_t}(\z) \d t + \sqrt{2\mathcal{T}_t} \d \overline{\vec{w}}_t\quad \z(T) = \z_T \sim \mathcal{N}(\vec{0}, \vec{I})   \\
 \d \widehat{\z} &= \nabla \log \qhat_{\sigma_t}(\zhat) \d t  + \sqrt{2\mathcal{T}_t} \d \overline{\vec{w}}_t\quad \widehat{\z}(T) =  \widehat{\z}_T \sim \mathcal{N}(\vec{0}, \vec{I}),   \\
\end{align*}

where $0\leq T\leq \infty$ is a given constant,  and $\overline{\vec{w}} \in \R^n$ is a $n-$dimensional Brownian motion. 
Suppose that there exist a process $\alpha (\z, t)$ such that
\begin{equation}\label{eq:alphadef}
    \alpha(\z, t) = \frac{\big [ \nabla \log q_{\sigma_t}(\z_t) - \nabla \log \qhat_{\sigma_t}(\z_t) \big] }{\sqrt{2\T_t} },
\end{equation}
which satisfies Novikov's condition 
\begin{equation*}
    \E\left [ exp\left (\frac{1}{2}\int_0^T\alpha^2(\z, t) \d t\right ) \right ] = \E\left [ exp\left (\frac{1}{2}\int_0^T \frac{ \| \nabla \log q_{\sigma_t}(\z_t) - \nabla \log \qhat_{\sigma_t}(\z_t) \|_2^2 }{2\T_t } \d t \right ) \right ] < \infty, 
\end{equation*}
where $\E = \E_\Q$ is the expectation with respect to $\Q$, probability measure induced by the process $\{\z_t\}_{t = T}^0$. Then, we can define $M_T$ and probability measure $\Qhat$, induced by process $\{\zhat_t\}_{t = T}^0$ as 
\begin{align*}\label{eq:Radon_Nikodym}
   & M_T\defn \\& \exp \Bigg ( - \int_0^T   \frac{\big [ \nabla \log q_{\sigma_t}(\z_t) - \nabla \log \qhat_{\sigma_t}(\z_t) \big] }{\sqrt{2\T_t} } \d\overline{\vec{w}}_t - \frac{1}{2}\int_0^T  \frac{ \| \nabla \log q_{\sigma_t}(\z_t) - \nabla \log \qhat_{\sigma_t}(\z_t) \|_2^2 }{2\T_t } \d t \Bigg ),
\end{align*}
where 
\begin{equation}
   t \leq T \quad \text{and } \quad \d \Qhat \defn  M_T \d \Q. 
\end{equation}
\end{lemma}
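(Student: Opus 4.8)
The plan is to establish this as an instance of the classical Girsanov change-of-measure theorem, realizing the two SDEs as the same trajectory viewed under two equivalent path measures that differ only through their drift. The object $M_T$ is the Dol\'eans--Dade exponential associated with the integrand $\alpha(\z, t)$ from \eqref{eq:alphadef}, and the proof reduces to two steps: (i) verifying that the associated running exponential is a genuine $\Q$-martingale, and (ii) checking that the induced measure change converts the drift $\nabla \log q_{\sigma_t}$ into $\nabla \log \qhat_{\sigma_t}$.

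First I would introduce the running exponential process
\[
M_t \defn \exp\left(-\int_0^t \alpha(\z, s)\,\d\overline{\vec{w}}_s - \tfrac{1}{2}\int_0^t \|\alpha(\z, s)\|_2^2\,\d s\right),
\]
which coincides with the stated $M_T$ at $t = T$ after substituting the definition of $\alpha$. Applying It\^o's formula gives $\d M_t = -M_t\,\alpha(\z, t)\,\d\overline{\vec{w}}_t$, identifying $M_t$ as a nonnegative local martingale under $\Q$ with $M_0 = 1$.

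The crucial step is to upgrade $M_t$ from a local to a true martingale, which is precisely where Novikov's condition enters. Since the hypothesis guarantees $\E_\Q[\exp(\tfrac{1}{2}\int_0^T \|\alpha(\z, t)\|_2^2\,\d t)] < \infty$, the Novikov criterion yields that $(M_t)_{t \in [0,T]}$ is a uniformly integrable $\Q$-martingale with $\E_\Q[M_T] = 1$. Consequently $\d\Qhat \defn M_T\,\d\Q$ defines a bona fide probability measure equivalent to $\Q$, with strictly positive Radon--Nikodym density $M_T$. I expect this martingale verification---rather than any subsequent algebra---to be the main technical obstacle, since it is the only place where the integrability hypothesis is genuinely needed; the remaining manipulations are purely algebraic substitutions.

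Finally I would invoke the Girsanov theorem to conclude that under $\Qhat$ the shifted process
\[
\widehat{\vec{w}}_t \defn \overline{\vec{w}}_t + \int_0^t \alpha(\z, s)\,\d s
\]
is an $n$-dimensional Brownian motion. Substituting $\d\overline{\vec{w}}_t = \d\widehat{\vec{w}}_t - \alpha(\z, t)\,\d t$ into the first SDE and using the identity $\sqrt{2\T_t}\,\alpha(\z, t) = \nabla \log q_{\sigma_t}(\z) - \nabla \log \qhat_{\sigma_t}(\z)$ cancels the true drift against the correction term, leaving $\d\z = \nabla \log \qhat_{\sigma_t}(\z)\,\d t + \sqrt{2\T_t}\,\d\widehat{\vec{w}}_t$. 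Hence under $\Qhat$ the trajectory of $\z$ obeys exactly the dynamics defining $\zhat$, so $\Qhat$ is the path measure induced by the second process and the relation $\d\Qhat = M_T\,\d\Q$ holds as asserted.
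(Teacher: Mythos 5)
Your proof is correct and is essentially the same argument the paper relies on: the paper does not prove this lemma in-house but cites \cite[Theorems 8.6.3--8.6.5]{oksendal2013stochastic}, and your route---realizing $M_t$ as the Dol\'eans--Dade exponential of $-\int_0^t \alpha(\z,s)\,\d\overline{\vec{w}}_s$, invoking Novikov's condition to upgrade the nonnegative local martingale to a uniformly integrable $\Q$-martingale with $\E_\Q[M_T]=1$, and then applying the Girsanov drift shift so that under $\d\Qhat = M_T\,\d\Q$ the trajectory obeys the $\nabla\log\qhat_{\sigma_t}$ dynamics---is precisely the classical proof found in that reference. You also correctly treat Novikov's condition as a hypothesis of the lemma itself; the paper's verification of it for the specific $\alpha$ in \eqref{eq:alphadef} (via Assumption~\ref{As:BoundedVariance}) appears in a separate remark and is not part of the lemma's proof.
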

Proof of the Girsanov Theorems can be found in \cite[Theorems 8.6.3, 8.6.4, and 8.6.5]{oksendal2013stochastic}.

\textbf{Remark.} Note that it can be shown that Novikov's condition is satisfied for function $\alpha(\z, t)$ in \eqref{eq:alphadef} as
\begin{align*}
&\E\left [ exp\left (\frac{1}{2}\int_0^T \frac{ \| \nabla \log q_{\sigma_t}(\z_t) - \nabla \log \qhat_{\sigma_t}(\z_t) \|_2^2 }{2\T_t } \d t \right ) \right ] \\
& = \E\left [ exp\left (\frac{1}{2}\int_0^T \frac{ \E\left[\| \nabla \log q_{\sigma_t}(\z_t) - \nabla \log \qhat_{\sigma_t}(\z_t) \|_2^2 \right] }{2\T_t } \d t\right ) \right ] \\
    &\quad \leq \frac{v^2}{w} \cdot exp\left (\frac{1}{2}\int_0^T \frac{1}{2\T_t } \d t \right )   < \infty, 
\end{align*}
where in the second line, we use the total law of expectation  (i.e., $\E[\vec{a}] = \E[ \E[\vec{a}|\vec{b}]]$) we use the law of iterated expectations over \( w \) sampling masks \( \vec{S}^{(i)} \sim p(\vec{S}) \) for \( i = 1, \dots, w \). Here, we use Assumption~\ref{As:BoundedVariance}and the fact the $\int_0^T (1/(2\T_t)) ~\d t$ is a finite constant.

\begin{lemma}
    \label{Lem:inequalitymarginalized}
    Let $\Q$ and $\Qhat$ be the path measure of two stochastic processes $\{\z(t)\}_{t = 0}^T$ and $\{\zhat(t)\}_{t = 0}^T$. We denote $q_0$ and $\qhat_0$ as the marginal distribution of $\z(0)$ and $\zhat(0)$. Then, we have 
    \begin{equation*}
        \infdiv{q_0}{\qhat_0} \leq\infdiv{\Q}{\Qhat}. 
    \end{equation*}
\end{lemma}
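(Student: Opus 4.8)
The plan is to recognize the claimed inequality as the data-processing inequality for KL divergence, specialized to the deterministic time-$0$ evaluation map. The marginals $q_0$ and $\qhat_0$ are precisely the pushforwards of the path measures $\Q$ and $\Qhat$ under the measurable map $\pi_0 : \z_{0:T} \mapsto \z(0)$, and applying one fixed map to both arguments of a KL divergence can never increase it. Rather than citing data-processing as a black box, I would derive it quantitatively from the chain rule of KL divergence (Lemma~\ref{Lem:ChainRule}), which is already the workhorse of the main proof, so that the argument stays self-contained within the machinery the paper has introduced.

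First I would disintegrate each path law over the value of its process at time $0$. Writing $\Q(\cdot \mid \z(0) = \z_0)$ and $\Qhat(\cdot \mid \zhat(0) = \z_0)$ for the regular conditional path laws given the time-$0$ state, the chain rule yields the factorization
\[
\infdiv{\Q}{\Qhat} = \infdiv{q_0}{\qhat_0} + \E_{\z_0 \sim q_0}\left[ \infdiv{\Q(\cdot \mid \z(0) = \z_0)}{\Qhat(\cdot \mid \zhat(0) = \z_0)} \right].
\]
This is exactly the decomposition used at \eqref{eq:KLpathmeasures}, now disintegrated over the terminal time-$0$ marginal instead of the initial one.

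The conclusion then follows in a single line: the residual term is an expectation of KL divergences, each nonnegative, hence the whole expectation is $\geq 0$. Dropping it gives $\infdiv{q_0}{\qhat_0} \leq \infdiv{\Q}{\Qhat}$, as claimed.

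The only real obstacle is measure-theoretic bookkeeping rather than analytic content: the measures $\Q$ and $\Qhat$ live on an infinite-dimensional path space, so one must ensure that the regular conditional distributions given $\z(0)$ exist and that the chain-rule decomposition is valid in that setting. Because the paper already invokes the chain rule on these same path measures (conditioning on $\z(T)$ in \eqref{eq:KLpathmeasures}), the requisite disintegration is available under the same standing assumptions, and no new input beyond nonnegativity of the KL divergence is needed.
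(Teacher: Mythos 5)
Your proposal is correct and takes essentially the same route as the paper: the paper's proof of Lemma~\ref{Lem:inequalitymarginalized} applies exactly this chain-rule decomposition of $\infdiv{\Q}{\Qhat}$ over the time-$0$ marginal, identifies the first term as $\infdiv{q_0}{\qhat_0}$, and drops the remaining expected conditional KL term by nonnegativity. Your framing via the data-processing inequality for the evaluation map $\pi_0$ and the remark on disintegration are accurate glosses on the same argument, not a different proof.
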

\begin{proof}
From the chain rule of KL divergence, we have 
\begin{align*}
    \infdiv{\Q}{\Qhat}  &= \infdiv{\Q_{\z(0)= \z_0}}{\Qhat_{\zhat(0) =\z_0 }}\\
    &+ \int_{\z}  \infdiv{\Q(.\mid \z(0)= \z_0)}{\Qhat(.\mid \zhat(0) =\z_0 )} \Q_{\z(0)= \z_0}(\d \z) \\
    & =  \infdiv{q_0}{\qhat_0} +\int_{\z}  \infdiv{\Q(.\mid \z(0)= \z_0)}{\Qhat(.\mid \zhat(0) =\z_0 )} \Q_{\z(0)= \z_0}(\d \z).
\end{align*}
From the non-negativity of KL divergence, we obtain the desired results. 
\end{proof}

\begin{lemma}\label{Lem:ChainRule}
\textbf{(Chain Rule of KL Divergence.)}

Let $\Q$ and $\Qhat$ be the path measure induced by the two following reverse-time SDEs
\begin{align*}
       \d \z &= \nabla \log q_{\sigma_t}(\z) \d t  + \sqrt{2\mathcal{T}_t} \d \vec{w}_t\quad \z(T) = \z_T \sim \mathcal{N}(\vec{0}, \vec{I})  \\
 \d \widehat{\z} &= \nabla \log \qhat_{\sigma_t}(\zhat) \d t  + \sqrt{2\mathcal{T}_t} \d \vec{w}_t\quad \widehat{\z}(T) =  \widehat{\z}_T \sim \mathcal{N}(\vec{0}, \vec{I}).   \\
\end{align*}
From the chain rule of KL divergence, we have 
\begin{align*}
\infdiv{\Q}{\Qhat} &= \infdiv{\Q_{\z(T)= \z_T}}{\Qhat_{\zhat(T)= \zhat_T}} \\
   & + \int_{\z}  \infdiv{\Q(.|\z(T) =\z_T)}{\Qhat(.| \zhat(T) =\zhat_T)}\,\Q_{\z(T) = \z_T} (\d \z)\\
   &= \infdiv{\Q_{\z(T)= \z_T}}{\Qhat_{\zhat(T)= \zhat_T}}  + \E_{\z_T\sim \mathcal{N}(\vec{0}, \vec{I}) }  \left [\infdiv{\Q(.|\z= \z_T)}{\Qhat(.|\zhat= \z_T)} \right ] \\
   & = \E_{\z_T\sim \mathcal{N}(\vec{0}, \vec{I}) }  \left [\infdiv{\Q(.|\z= \z_T)}{\Qhat(.|\zhat= \z_T)} \right ],
\end{align*}
where in the last two equalities, we use the fact that $\Q_{\z_T} = \Qhat_{\z_T} = \mathcal{N}(\vec{0}, \vec{I})$. 
\end{lemma}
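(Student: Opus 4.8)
The plan is to recognize the claimed identity as the disintegration (chain rule) of relative entropy, applied on the space of continuous paths and specialized to the coordinate given by the terminal time $T$. Since both reverse-time SDEs are driven by the same Brownian motion and initialized so that $\z(T)$ and $\zhat(T)$ are both distributed as $\normpdf(\vec{0}, \vec{I})$, the marginal contribution at time $T$ will vanish, leaving only the expected relative entropy between the conditional path laws. Throughout I treat $\Q$ and $\Qhat$ as Borel probability measures on the path space $C([0,T];\R^n)$ on which the two Itô processes live.

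First I would introduce the terminal-evaluation map $\pi$ sending a trajectory $\z_{0:T}$ to its endpoint $\z(T)$, and disintegrate each path measure along $\pi$: I write $\Q$ as the endpoint marginal $\Q_{\z(T)}$ together with the regular conditional law $\Q(\cdot \mid \z(T) = \z_T)$ of the whole trajectory given its value at $T$, and analogously for $\Qhat$. With this factorization in hand, I would invoke the general chain rule for KL divergence, $\infdiv{\Q}{\Qhat} = \infdiv{\Q_{\z(T)}}{\Qhat_{\zhat(T)}} + \int \infdiv{\Q(\cdot \mid \z(T)=\z_T)}{\Qhat(\cdot \mid \zhat(T) = \z_T)}\,\Q_{\z(T)}(\d\z)$, which is exactly the first displayed equality of the lemma.

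To close the argument I would identify the two endpoint marginals. Both reverse-time SDEs are initialized at $t=T$ from the same standard Gaussian, so $\Q_{\z(T)} = \Qhat_{\zhat(T)} = \normpdf(\vec{0}, \vec{I})$; hence the leading term vanishes because the two measures coincide, $\infdiv{\normpdf(\vec{0},\vec{I})}{\normpdf(\vec{0},\vec{I})} = 0$. Rewriting the remaining integral against $\Q_{\z(T)} = \normpdf(\vec{0},\vec{I})$ as the expectation $\E_{\z_T \sim \normpdf(\vec{0},\vec{I})}[\,\cdot\,]$ then produces the final equality in the stated form.

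The main obstacle is the measure-theoretic groundwork that licenses the chain rule on path space: one must ensure the disintegration exists (i.e., regular conditional probabilities given $\z(T)$ exist, which holds because $C([0,T];\R^n)$ is Polish) and that $\Q \ll \Qhat$ so that each relative entropy appearing is well defined. This absolute continuity is exactly what the preceding Girsanov lemma (Lemma~\ref{lem:GirsavonTheorem}) supplies under Novikov's condition, via the strictly positive density $M_T = \d\Qhat/\d\Q$; once it is in place, the remaining steps are standard bookkeeping of marginal and conditional measures.
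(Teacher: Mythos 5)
Your proof follows essentially the same route as the paper: you disintegrate the path measures at the terminal time \(T\), invoke the chain rule of KL divergence to split \(\infdiv{\Q}{\Qhat}\) into the endpoint-marginal term plus the expected conditional divergence, and observe that \(\Q_{\z(T)} = \Qhat_{\zhat(T)} = \mathcal{N}(\vec{0},\vec{I})\) makes the marginal term vanish, exactly as in the paper's derivation. Your additional remarks---that regular conditional probabilities exist because the path space is Polish, and that \(\Q \ll \Qhat\) follows from the strictly positive Girsanov density of Lemma~\ref{lem:GirsavonTheorem}---merely make explicit the measure-theoretic groundwork the paper leaves implicit, and are correct.
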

\newpage

\section{Implementation Details}

\subsection{MSM Framework in MR Images}
\label{subsec:mri_framework}

The main manuscript illustrates the measurement score-based diffusion model \textit{(MSM)} framework’s training and sampling schemes, but omits domain-specific transformations for clarity. These transformations are essential in the MRI setting, which requires conversions between measurement and image spaces before and after denoising.

Specifically, we apply the inverse Fourier transform \(\vec{F}^{\top}\) followed by the adjoint coil-sensitivity operator \(\vec{C}^{\top}\) to project the measurements into image space before denoising. After denoising, we map the denoised image back to measurement space by applying the forward coil-sensitivity operator \(\vec{C}\) and the Fourier transform \(\vec{F}\).

This results in a modified version of Equation~\ref{eq:denoised_sub_measurement}, given by:
\begin{equation}\label{eq:denoised_sub_measurement_mri}
\hat{\s}_{\theta}(\s_t\,;\, \sigma_t) \leftarrow \vec{S} \vec{FC}\,\mathsf{D}_{\theta}(\vec{C}^{\top}\vec{F}^{\top}\s_t \,;\, \sigma_t),
\end{equation}
where \(\vec{S}\) is the subsampling operator. A visual illustration is provided in Figure~\ref{fig:illustration_mri}.

\begin{figure*}[htbp]
\begin{center}
\includegraphics[width=1.0\textwidth]{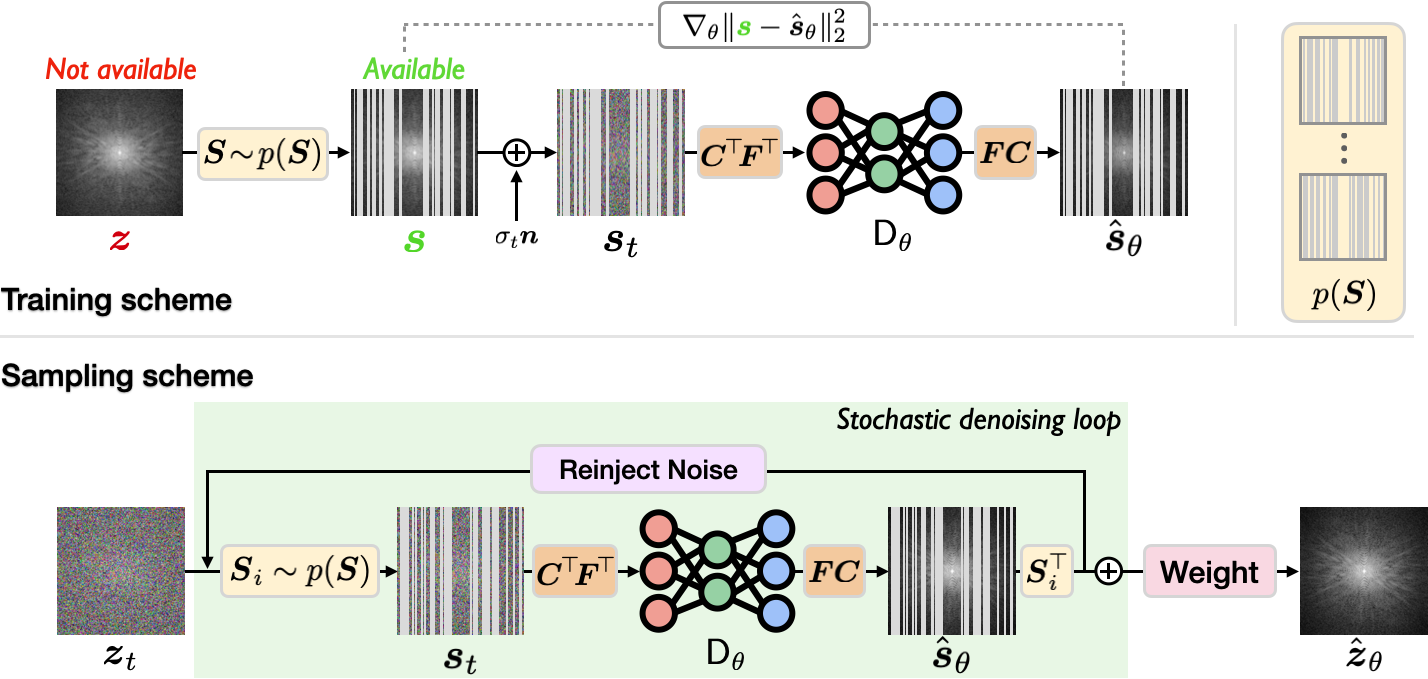}
\end{center}
\caption{Illustration of the \emph{Measurement Score-based diffusion Model (MSM)} for training and sampling using subsampled MRI measurements. MSM operates directly on k-space measurements, with minor domain transformations between k-space and image space only at the input and output of the diffusion model $\mathsf{D}_{\theta}$.}
\vspace{-.3cm}
\label{fig:illustration_mri}
\end{figure*}

\subsection{Stochastic Posterior Sampling for Compressed-Sensing MRI}
\label{appendix:mri_posterior_sampling}

Our posterior sampling algorithm is described in Section~\ref{subsec:measurementdiff_posterior_sampling}.
For compressed-sensing MRI, we apply an additional simplification based on directly approximating the \emph{partial posterior score} for each partially subsampled measurement within the stochastic algorithm.

To perform posterior sampling for compressed-sensing MRI, we estimate the posterior score using a stochastic ensemble, similar to the MSM score ensemble in Section \ref{subsec:measurementdiff_prior_sampling}:
\begin{equation}
\begin{aligned}
    \nabla \log p_{\sigma_t}(\z_t \mid \y) 
    &= \nabla \log p_{\sigma_t}(\vec{z}_t) + \nabla \log p_{\sigma_t}(\y \mid \z_t) \\
    &\approx  \vec{W} \left[\frac{1}{w} \sum_{i=1}^{w} \vec{S}^{(i)\top} \left( \nabla \log p_{\sigma_t}(\s_t^{(i)}) + \nabla \log p_{\sigma_t}(\y^{(i)} \mid \s_t^{(i)})\right)\Big|_{\vec{s}_t^{(i)} = \vec{S}^{(i)} \vec{z}_t}\right],
\end{aligned}
\end{equation}
where for each subsampling operator \(\vec{S}^{(i)} \in \mathbb{R}^{m_i \times n}\), we define \(\y^{(i)} = \vec{S}^{(i)} \vec{H}^{\top} \y\) to project the observed measurement \(\y\) into the same partial measurement space as \(\s_t^{(i)}\).
The log-likelihood gradient is approximated by
\begin{equation} \label{eq:mdiff_subsampling_likelihood_approximation}
\begin{aligned}
    \nabla\log p_{\sigma_t}(\y^{(i)} \mid \s^{(i)}_t) 
    &\approx \nabla \log p_{\sigma_t}(\y^{(i)} \mid \hat{\s}_{\theta}^{(i)}) \\
    &= \gamma_t \nabla \left\| \y^{(i)} - \tilde{\vec{H}}^{(i)}\hat{\s}_\theta^{(i)} \right\|_2^2,
\end{aligned}
\end{equation}
where \( \gamma_t \) is a tunable step size, and we define $\tilde{\vec{H}}^{(i)} = \vec{S}^{(i)} \vec{H}^{\top} \vec{H} \vec{S}^{(i)\top}$ as the degradation operator $\vec{H}^{\top} \vec{H} $ restricted to the coordinates selected by $\vec{S}^{(i)}$.

\subsection{Model Architecture and Training Configuration}
\label{appendix:training_details}

We adopted the U-Net architecture~\cite{ronneberger2015unet}, following the design used in~\cite{ho_NEURIPS2020_ddpm, dhariwal2021beat}, as our diffusion model backbone.
Models were trained with the AdamW optimizer~\cite{loshchilov2019adamw} and used an exponential moving average (EMA) to stabilize training by averaging model weights over time, using a decay rate of 0.9999 for gradual updates.
The diffusion process consisted of 1000 timesteps, with a linearly increasing noise variance schedule starting from 0.0001 and reaching 0.2 at the final step.
All diffusion models, including diffusion-based baselines, were trained with the same architecture for each application. The hyperparameter setup and architectural details are summarized in Table~\ref{table:ablation_model_details}.
\begin{table}[htbp]
    \setstretch{1.5} %
    \centering
    \scriptsize
    \caption{\small Diffusion model architecture and training hyperparameters for each dataset.}
    \vspace{0.15cm}
    \renewcommand{\arraystretch}{1.0}
    \begin{tabular}{@{}p{2.8cm}p{2.5cm}p{0.5cm}p{0.5cm}@{}p{0.2cm}@{}p{0.5cm}p{0.5cm}@{}}
    \toprule
    \noalign{\vskip -1.67ex}
     & \multicolumn{1}{c}{\textbf{RGB Face Images}} & \multicolumn{1}{c}{\textbf{Multi-Coil MRI}}  \\ \noalign{\vskip -.65ex}
    \cmidrule{1-3} \noalign{\vskip -1.5ex}
    \textbf{Base channel width} & \multicolumn{2}{c}{128}   \\
    \textbf{Attention resolutions} & \multicolumn{2}{c}{[32, 16, 8]}  \\
    \textbf{$\#$ Attention heads} & \multicolumn{2}{c}{4}  \\
    \textbf{$\#$ Residual blocks} & \multicolumn{2}{c}{2}  \\
    \textbf{Batch size} & \multicolumn{1}{c}{128} & \multicolumn{1}{c}{32} \\
    \textbf{Learning rate} & \multicolumn{1}{c}{$5e-5$} & \multicolumn{1}{c}{$1e-5$} \\
    \textbf{Channel multipliers} & \multicolumn{1}{c}{[1, 1, 2, 3, 4]} & \multicolumn{1}{c}{[1, 1, 2, 2, 4, 4]} \\
    \textbf{$\#$ Input/Output channels} & \multicolumn{1}{c}{3} & \multicolumn{1}{c}{2} \\
    \bottomrule
    \end{tabular}
    \label{table:ablation_model_details}
\end{table}

\subsection{Comparison Methods for Unconditional Sampling}
\label{appendix:baselines_for_uncond}

We now provide detailed setups for used baselines for unconditional sampling experiments.

\textbf{Oracle diffusion.} For both natural images and multi-coil MRI, we train the oracle diffusion model using clean images without any degradation. The model is trained to predict the noise component of noisy images for both data types.

\textbf{Ambient diffusion~\cite{daras2024ambientdiffusion}.} For natural images, under the same training setup as MSM in noiseless and subsampled data scenario, following the recommendation of \cite{daras2024ambientdiffusion} to apply minimal additional corruption, we define the further degradation operator $\tilde{\vec{S}}$ by dropping one additional $32 \times 32$ pixel box.
The model is trained to directly predict the clean image, which we found to perform better than predicting the noise component.

For MRI data, under the same training setup as the noiseless and subsampled setting of MSM, to define the further degradation operator $\tilde{\vec{S}}$, we drop an additional $10\%$ of the sampling pattern while preserving the autocalibration signal region.
Unlike the natural image case, the model is trained to predict the noise component, which we found to perform better than direct clean image prediction.

During sampling, we use 200 steps of denoising diffusion implicit models (DDIM)~\cite{song2021ddim} and apply the same further degradation configuration used during training to subsample the diffusion iterate.

\textbf{GSURE diffusion~\cite{kawar2023gsure}.} We only apply this method to the RGB face data, not the multi-coil MRI data, because GSURE diffusion's incompatibility with the multi-coil MRI.
Under the same training setup of MSM's noisy and subsampled training setup, we train the GSURE diffusion model to predict the clean images and follow exactly the same training configuration as described in \cite{kawar2023gsure}.

Note that we exclude recent expectation-maximization-based methods~\cite{bai2024emdiff1, bai2025emdiff2}, as their reliance on clean-image initialization is incompatible with our setting, where no clean images are available.

\subsection{Comparison Methods for Imaging Inverse Problems}
\label{appendix:baselines_for_cond}

We now describe the baseline methods used for solving inverse problems.

\textbf{Diffusion posterior sampilng (DPS)~\cite{chung2023dps}.} DPS estimates the gradient of the log-likelihood using the MMSE estimate $\hat{\x}_\theta(\x_t)$ from a pretrained diffusion model as
\begin{equation} \label{eq:dps}
\nabla \log p(\y \mid \x_t) \approx \nabla \log p(\y \mid \hat{\x}_\theta(\x_t)),
\end{equation}
where the gradient is taken with respect to $\x_t$.

Following the original implementation, we set the step size for the likelihood gradient as $\gamma_t = \frac{c}{\|\y-\vec{A}\mathbb{E}[\x_0\mid\x_t]\|_2}$, where $c$ is selected via grid search within the recommended range $[0.1, 10]$.
We used $c=2$ for the super-resolution experiment, $c=0.7$ for box inpainting, and $c=10$ for compressed sensing MRI.

\textbf{Ambient diffusion posterior sampling (A-DPS)~\cite{aali2025ambientposterior}.}  
A-DPS follows the same posterior sampling strategy as DPS but uses a diffusion model trained on noiseless subsampled data.  
The step size is set in the same form as DPS: \(\gamma_t = \frac{c}{\|\y - \vec{A} \mathbb{E}[\x_0 \mid \x_t]\|_2}\), with the constant \(c\) chosen according to each inverse problem:  
\(c = 2\) for super-resolution, \(c = 0.7\) for box inpainting, and \(c = 10\) for compressed sensing MRI.

\textbf{Robust self-supervision via data undersampling (Robust SSDU)~\cite{millard2024robustssdu}.} 
Robust SSDU is designed to handle noisy, subsampled measurements by introducing additional subsampling and added noise to the observed measurements during training.
Specifically, given the noisy subsampled input $\s = \vec{S}_1\z + \n$, where $\n$ is additive Gaussian noise with standard deviation $\sigma_n$, Robust SSDU forms a further corrupted input
\begin{equation}
    \tilde{\s} = \vec{S}_2\s + \tilde{\n},
\end{equation}
where $\vec{S}_2$ has an acceleration factor of $R=2$, and $\tilde{\n}$ is independent Gaussian noise with standard deviation $\sigma_n$, matching the original measurement noise level, following the original implementation.

\textbf{Denoising diffusion null-space model (DDNM)~\cite{wang2022ddnm}.} DDNM also uses a diffusion model trained on clean data and introduces a projection-based update that blends the prior estimate and the measurement:
\begin{equation} \label{eq:ddnm}
    \mathbb{E}[\x_0 \mid \x_t, \y] \approx (\vec{I} - \vec{\Sigma}_t \vec{A}^\dagger \vec{A})\mathbb{E}[\x_0 \mid \x_t] + \vec{\Sigma}_t \vec{A}^\dagger \y,
\end{equation}
where \(\vec{A}^\dagger\) is the pseudoinverse and \(\vec{\Sigma}_t\) is a weighting matrix, such as \(\vec{\Sigma}_t = \lambda_t \vec{I}\) or a spectrally tuned version.

We followed the enhanced version of DDNM described in~\cite[Section 3.3 and Equation (19)]{wang2022ddnm} to specify weight matrix $\vec{\Sigma}_t$ in \eqref{eq:ddnm}.

\subsection{Measuring FID Score}
\label{app:fid}

To compute the Fréchet Inception Distance (FID), we used the implementation provided in the following repository\footnote{\url{https://github.com/mseitzer/pytorch-fid}}.
For each method, we generated 3000 images, then computed FID using features extracted from a pretrained inception network.
For MRI images, which have complex-valued channels, we converted them to magnitude images and replicated the single-channel magnitude three times to form a 3-channel input compatible with the pretrained Inception network.
Note that although the pretrained FID model was trained on natural images, it still reflects perceptual quality on MRI data~\cite{Bendel2023rcGAN_fidfastmri}.

\newpage
\section{Ablation Studies and Discussions}

\subsection{Image Sampling with MSM using Extremely Subsampled Data}
\label{appendix:extreme}

We evaluated the unconditional sampling capability of our framework in a more challenging training scenario with extremely subsampled data. Specifically, we used MRI data with k-space subsampling via random masks at an acceleration factor of $R = 8$, including fully-sampled vertical lines and the central 20 lines for autocalibration.

MSM was configured with a stochastic loop parameter $w=2$ and took 200 sampling steps.
As a baseline, ambient diffusion took 200 sampling steps.
As shown in Table~\ref{table:ablation_mri_fid} and Figure~\ref{fig:ablation_mdiff_extreme}, MSM achieves a lower FID score than the Ambient diffusion on the same data setup, demonstrating the robustness of our approach under high subsampling rates.

\begin{table}[htbp]
    \setstretch{1.5} %
    \centering
    \scriptsize
    \caption{\small FID scores under different training settings on multi-coil brain MR images. \hlgreen{\textbf{Best values}} are highlighted for each training scenario, with comparisons shown when corresponding baseline methods are available. Note how MSM consistently achieves lower FID scores than the Ambient diffusion, even in extremely subsampled data scenarios.}
    \vspace{0.15cm}
    \renewcommand{\arraystretch}{1.5}
    \begin{tabular}{@{}p{2.5cm}p{2.5cm}p{0.5cm}p{0.5cm}@{}p{0.2cm}@{}p{0.5cm}p{0.5cm}@{}}
    \toprule
    \noalign{\vskip -1.67ex}
    \textbf{Training data} & \multicolumn{1}{c}{\textbf{Methods}} & \multicolumn{1}{c}{\textbf{FID}$\downarrow$}  \\ \noalign{\vskip -.65ex}
    \cmidrule{1-3} \noalign{\vskip -1.5ex}
    No degradation & \multicolumn{1}{c}{Oracle diffusion} & \multicolumn{1}{c}{$29.25$}  \\ \cdashline{1-3}  \noalign{\vskip -.64ex}
    \multirow{2}{*}{$R = 4, \eta = 0$} & \multicolumn{1}{c}{MSM} & \multicolumn{1}{c}{\hlgreen{$\bm{43.60}$}}  \\
      & \multicolumn{1}{c}{Ambient diffusion} & \multicolumn{1}{c}{$47.80$}  \\ \cdashline{1-3}  \noalign{\vskip -.64ex}
    \multirow{2}{*}{$R = 8, \eta = 0$} & \multicolumn{1}{c}{MSM} & \multicolumn{1}{c}{\hlgreen{$\bm{74.92}$}}  \\
      & \multicolumn{1}{c}{Ambient diffusion} & \multicolumn{1}{c}{$84.77$}  \\ 
    \bottomrule
    \end{tabular}
    \label{table:ablation_mri_fid}
\end{table}
\begin{figure*}[htbp]
\vspace{-.4cm}
\begin{center}
\includegraphics[width=1.\textwidth]{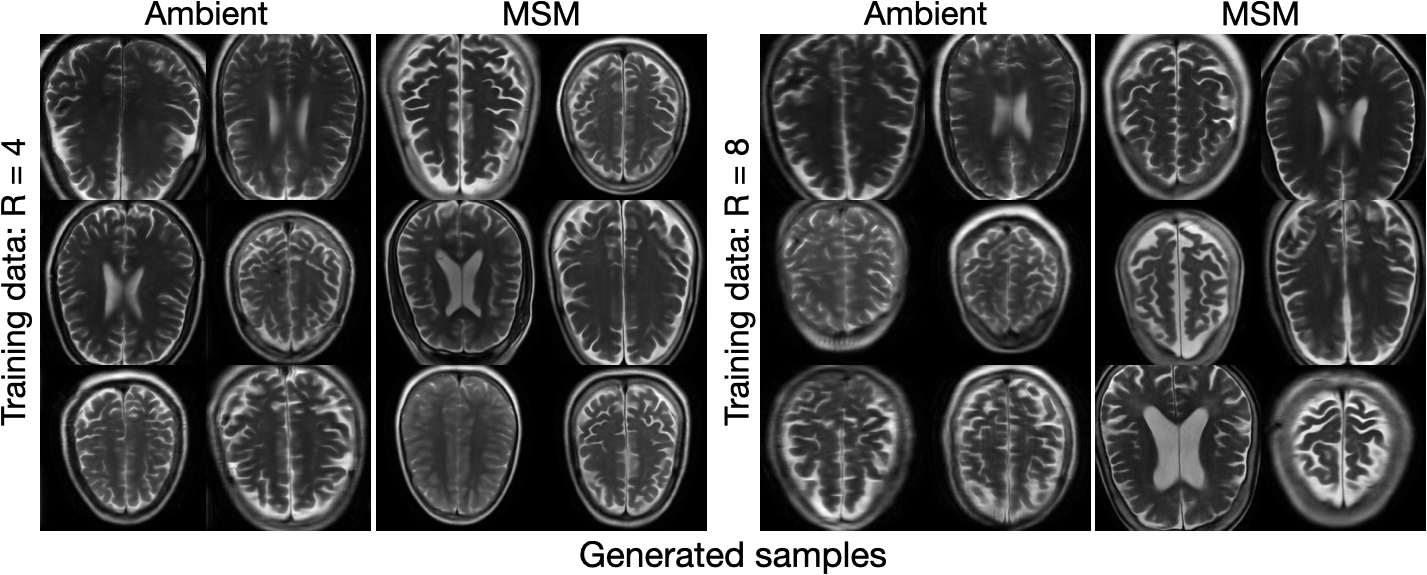}
\end{center}
\caption{Visual comparison of MSM trained under extreme subsampling ($R = 8$) with MSM and baseline methods trained under less degraded conditions.}
\vspace{-.3cm}
\label{fig:ablation_mdiff_extreme}
\end{figure*}

\newpage

\subsection{Solving Inverse Problems with MSM that Trained on Noisy and Subsampled Data}
\label{appendix:solve_inv_wth_noisy_mdiff}

We have shown that MSM trained on subsampled data can effectively solve inverse problems for both natural images and multi-coil MRI.
We further verified that MSM, when trained on noisy and subsampled data, can achieve comparable performance using the same step size for the log-likelihood gradient as in the subsampled-only scenario, as summarized in Table~\ref{table:ablation_quantitative_comparison_invproblems} and Figure~\ref{fig:ablation_mdiff_noisymdiff}.

\begin{table}[htbp]
    \centering
    \scriptsize
    \caption{\small Quantitative results on two natural image inverse problems and two compressed sensing MRI tasks. MSM, trained using only subsampled data, was compared with MSM, trained using noisy and subsampled data. The training scenario is shown in parentheses. Note that MSM trained on noisy and subsampled data achieved comparable performance across all metrics to MSM trained only on subsampled data.}
    \vspace{0.15cm}
    \renewcommand{\arraystretch}{0.9}
    \begin{tabular}{@{}p{1.5cm}p{1.0cm}p{0.5cm}@{}p{0.2cm}@{}p{0.5cm}@{}}
    \toprule
    \textbf{Testing data} &  & \multicolumn{1}{c}{\text{Input}}  & \multicolumn{1}{c}{\text{MSM (Noisy $\&$ Subsampled)}}   & \multicolumn{1}{c}{\text{MSM (Noiseless $\&$ Subsampled)}} \\
    \cmidrule{1-5}\\ \noalign{\vskip -1.9ex}
        \multirow{3}{*}{{\textbf{Box Inpainting}}} & \multicolumn{1}{c}{PSNR$\uparrow$} & \multicolumn{1}{c}{$18.26$} & \multicolumn{1}{c}{$24.16$}  & \multicolumn{1}{c}{$24.71$}  \\[+.95ex]
     & \multicolumn{1}{c}{SSIM$\uparrow$} & \multicolumn{1}{c}{$0.749$} & \multicolumn{1}{c}{$0.864$}   & \multicolumn{1}{c}{$0.867$}   \\[+.75ex]
      & \multicolumn{1}{c}{LPIPS$\downarrow$} & \multicolumn{1}{c}{$0.304$} & \multicolumn{1}{c}{$0.081$}   & \multicolumn{1}{c}{$0.076$}  \\[+.5ex]   
      \cdashline{1-5} \\ \noalign{\vskip -1.ex}
      \multirow{3}{*}{{\textbf{SR} ($\times 4$)}} & \multicolumn{1}{c}{PSNR$\uparrow$} & \multicolumn{1}{c}{$23.21$} & \multicolumn{1}{c}{$27.99$}    & \multicolumn{1}{c}{$28.11$}   \\[+.95ex]
     & \multicolumn{1}{c}{SSIM$\uparrow$} & \multicolumn{1}{c}{$0.728$} & \multicolumn{1}{c}{$0.868$}  & \multicolumn{1}{c}{$0.868$}   \\[+.75ex]
      & \multicolumn{1}{c}{LPIPS$\downarrow$} & \multicolumn{1}{c}{$0.459$} & \multicolumn{1}{c}{$0.127$}  & \multicolumn{1}{c}{$0.117$} \\[+.5ex]
      \bottomrule \\ \noalign{\vskip -1.ex}
    \multirow{3}{*}{\textbf{CS-MRI} ($\times 4$)} & \multicolumn{1}{c}{PSNR$\uparrow$} & \multicolumn{1}{c}{$22.75$} & \multicolumn{1}{c}{$29.74$}  & \multicolumn{1}{c}{$30.71$}   \\[+.95ex]
     & \multicolumn{1}{c}{SSIM$\uparrow$} & \multicolumn{1}{c}{$0.648$}  & \multicolumn{1}{c}{$0.826$}  & \multicolumn{1}{c}{$0.839$}   \\[+.75ex]
      & \multicolumn{1}{c}{LPIPS$\downarrow$} & \multicolumn{1}{c}{$0.306$} & \multicolumn{1}{c}{$0.168$}  & \multicolumn{1}{c}{$0.145$}  \\[+.5ex]
      \cdashline{1-5} \\ \noalign{\vskip -1.ex}
      \multirow{3}{*}{\textbf{CS-MRI} ($\times 6$)} & \multicolumn{1}{c}{PSNR$\uparrow$} & \multicolumn{1}{c}{$21.94$} & \multicolumn{1}{c}{$28.11$}  & \multicolumn{1}{c}{$28.86$}   \\[+.95ex]
     & \multicolumn{1}{c}{SSIM$\uparrow$} & \multicolumn{1}{c}{$0.617$}  & \multicolumn{1}{c}{$0.795$}  & \multicolumn{1}{c}{$0.805$}   \\[+.75ex]
      & \multicolumn{1}{c}{LPIPS$\downarrow$} & \multicolumn{1}{c}{$0.342$} & \multicolumn{1}{c}{$0.192$}  & \multicolumn{1}{c}{$0.168$}  \\[+.5ex]
    \bottomrule
    \end{tabular}
    \label{table:ablation_quantitative_comparison_invproblems}
\end{table}

\begin{figure*}[htbp]
\vspace{-.cm}
\begin{center}
\includegraphics[width=1.0\textwidth]{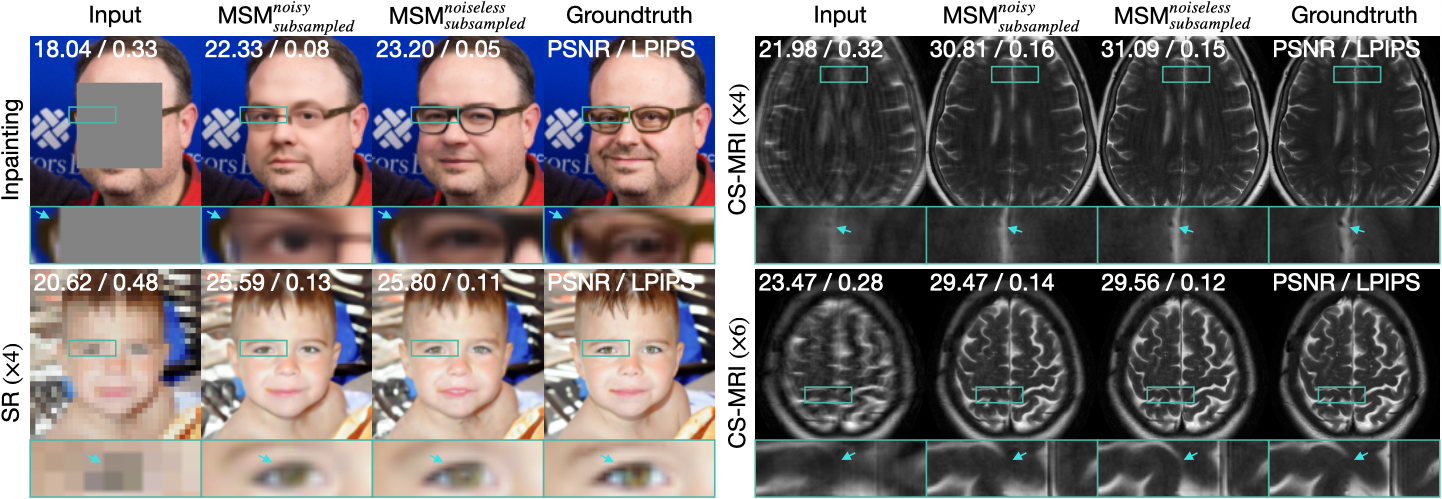}
\end{center}
\caption{Visual comparison between MSM trained on noisy and subsampled data and MSM trained on only subsampled data. Both models produce high-quality results on natural image and MRI tasks.}
\vspace{-.3cm}
\label{fig:ablation_mdiff_noisymdiff}
\end{figure*}

\newpage

\subsection{Comparison with Diffusion-Based Inverse Problem Solvers Trained on Clean Data}
\label{app:comparison_inv_wth_DIS}

We compared MSM against diffusion-based inverse problem solvers—DPS~\cite{chung2023dps} and DDNM~\cite{wang2022ddnm}—that use diffusion priors trained on clean images, on both RGB face images and multi-coil compressed sensing MRI.  
Detailed configurations of these methods are provided in Section~\ref{appendix:baselines_for_cond}.

The measurement noise level was set to \(\eta = 0.01\), and all experimental setups for both MSM and the baselines followed those described in Section~\ref{subsec:rgb_experiment} and Section~\ref{subsec:mri_experiment}.  
The results are summarized in Table~\ref{table:quantitative_comparison_invproblems_with_cleanDIS} and Figure~\ref{fig:ablation_DIS}.

\begin{table}[htbp]
    \centering
    \scriptsize
    \caption{\small Quantitative results on two natural image inverse problems and two compressed sensing MRI tasks. MSM, trained using only subsampled data, is compared with methods that use pretrained diffusion models trained on clean data. The number of iterations used for each method is shown in parentheses. \hlgreen{\textbf{Best}} and \hlblue{second-best} values are highlighted per metric (PSNR, SSIM, LPIPS). Note that despite not having access to the clean data, MSM approaches the performance of the clean data-based methods.}
    \vspace{0.15cm}
    \renewcommand{\arraystretch}{0.9}
    \begin{tabular}{@{}p{1.5cm}p{1.0cm}p{0.5cm}p{0.5cm}@{}p{0.2cm}@{}p{0.5cm}@{}}
    \toprule
    \textbf{Testing data} &  & \multicolumn{1}{c}{\text{Input}}  & \multicolumn{1}{c}{\text{DPS (1000)}}  & \multicolumn{1}{c}{\text{DDNM (200)}}   & \multicolumn{1}{c}{\text{MSM (200)}} \\
    \cmidrule{1-6}\\ \noalign{\vskip -1.9ex}
    \multirow{3}{*}{{\textbf{Box Inpainting}}} & \multicolumn{1}{c}{PSNR$\uparrow$} & \multicolumn{1}{c}{$18.26$}  & \multicolumn{1}{c}{$23.64$}  & \multicolumn{1}{c}{\hlgreen{$\bm{25.16}$}}  & \multicolumn{1}{c}{{\hlblue{$24.93$}}}  \\[+.95ex]
     & \multicolumn{1}{c}{SSIM$\uparrow$} & \multicolumn{1}{c}{$0.749$}  & \multicolumn{1}{c}{$0.864$}  & \multicolumn{1}{c}{\hlgreen{$\bm{0.883}$}}   & \multicolumn{1}{c}{\hlblue{$0.878$}}   \\[+.75ex]
      & \multicolumn{1}{c}{LPIPS$\downarrow$} & \multicolumn{1}{c}{$0.304$}   & \multicolumn{1}{c}{$0.077$} & \multicolumn{1}{c}{\hlblue{$0.071$}}   & \multicolumn{1}{c}{\hlgreen{$\bm{0.066}$}}  \\[+.5ex]   
      \cdashline{1-6} \\ \noalign{\vskip -1.ex}
    \multirow{3}{*}{{\textbf{SR} ($\times 4$)}} & \multicolumn{1}{c}{PSNR$\uparrow$} & \multicolumn{1}{c}{$23.21$}  & \multicolumn{1}{c}{$27.20$} & \multicolumn{1}{c}{\hlgreen{$\bm{28.82}$}}    & \multicolumn{1}{c}{\hlblue{$28.11$}}   \\[+.95ex]
     & \multicolumn{1}{c}{SSIM$\uparrow$} & \multicolumn{1}{c}{$0.728$} & \multicolumn{1}{c}{$0.841$} & \multicolumn{1}{c}{\hlgreen{$\bm{0.897}$}}  & \multicolumn{1}{c}{\hlblue{$0.868$}}   \\[+.75ex]
      & \multicolumn{1}{c}{LPIPS$\downarrow$} & \multicolumn{1}{c}{$0.459$} & \multicolumn{1}{c}{$0.128$} & \multicolumn{1}{c}{\hlblue{$0.126$}}  & \multicolumn{1}{c}{\hlgreen{$\bm{0.117}$}} \\[+.5ex]
      \bottomrule \\ \noalign{\vskip -1.ex}
    \multirow{3}{*}{\textbf{CS-MRI} ($\times 4$)} & \multicolumn{1}{c}{PSNR$\uparrow$} & \multicolumn{1}{c}{$22.75$}   & \multicolumn{1}{c}{\hlblue{$31.31$}} & \multicolumn{1}{c}{\hlgreen{$\bm{32.84}$}}  & \multicolumn{1}{c}{$30.71$}   \\[+.95ex]
     & \multicolumn{1}{c}{SSIM$\uparrow$} & \multicolumn{1}{c}{$0.648$}  & \multicolumn{1}{c}{\hlblue{$0.845$}}  & \multicolumn{1}{c}{\hlgreen{$\bm{0.895}$}}  & \multicolumn{1}{c}{$0.839$}   \\[+.75ex]
      & \multicolumn{1}{c}{LPIPS$\downarrow$} & \multicolumn{1}{c}{$0.306$}  & \multicolumn{1}{c}{\hlblue{$0.112$}}  & \multicolumn{1}{c}{\hlgreen{$\bm{0.104}$}}  & \multicolumn{1}{c}{$0.145$}  \\[+.5ex]
      \cdashline{1-6} \\ \noalign{\vskip -1.ex}
      \multirow{3}{*}{\textbf{CS-MRI} ($\times 6$)} & \multicolumn{1}{c}{PSNR$\uparrow$} & \multicolumn{1}{c}{$21.94$}  & \multicolumn{1}{c}{\hlblue{$29.19$}}  & \multicolumn{1}{c}{\hlgreen{$\bm{29.95}$}}  & \multicolumn{1}{c}{$28.86$}   \\[+.95ex]
     & \multicolumn{1}{c}{SSIM$\uparrow$} & \multicolumn{1}{c}{$0.728$}  & \multicolumn{1}{c}{$0.795$}  & \multicolumn{1}{c}{\hlgreen{$\bm{0.851}$}}  & \multicolumn{1}{c}{\hlblue{$0.805$}}   \\[+.75ex]
      & \multicolumn{1}{c}{LPIPS$\downarrow$} & \multicolumn{1}{c}{$0.459$}  & \multicolumn{1}{c}{\hlblue{$0.149$}} & \multicolumn{1}{c}{\hlgreen{$\bm{0.141}$}}  & \multicolumn{1}{c}{$0.168$}  \\[+.5ex]
    \bottomrule
    \end{tabular}
    \label{table:quantitative_comparison_invproblems_with_cleanDIS}
\end{table}

\begin{figure*}[htbp]
\vspace{-.cm}
\begin{center}
\includegraphics[width=1.0\textwidth]{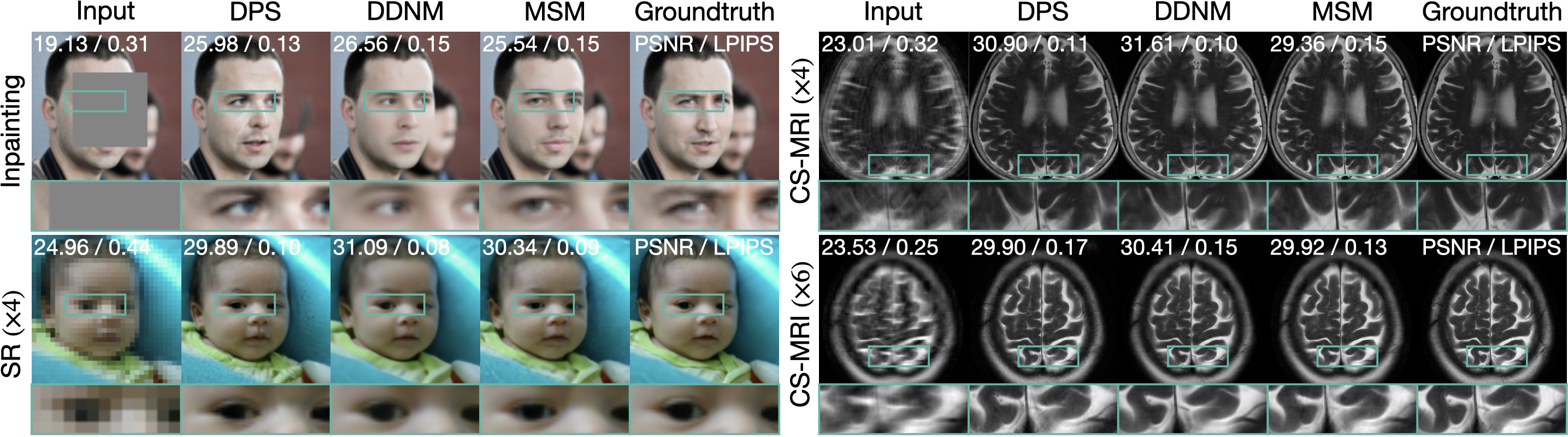}
\end{center}
\caption{Visual comparison of reconstructed images using diffusion-based inverse problem solvers.}
\vspace{-.3cm}
\label{fig:ablation_DIS}
\end{figure*}

\newpage

\subsection{Effect of Stochastic Loop Iterations on Sample Quality}
\label{appendix:effect_of_stochastic_loop}

Our MSM framework includes a configurable parameter \(w\), which controls the number of stochastic iterations used to denoise the full measurement iterates.
As theoretically justified in Section~\ref{sec:theory}, using a larger \(w\) yields a more accurate approximation of the full measurement distribution.

To empirically validate this, we fixed the number of diffusion sampling steps to 10 and explored three different values of \(w\) in \(\{1, 2, 4\}\).
We intentionally fixed the low number of sampling steps to isolate the effect of \(w\), as increasing the number of diffusion steps improves sample quality.
Figure~\ref{fig:ablation_stochasticloop} illustrates that larger \(w\) leads to visually more plausible generations: while \(w=1\) can still produce reasonable samples, it often results in artifacts,
such as visible boundaries in some regions.
As \(w\) increases, the results become more stable and visually coherent.
Table~\ref{table:face_fid_per_b} and Table~\ref{table:fastmri_fid_per_b} further support these observations quantitatively, showing that larger \(w\) achieves better FID scores across both datasets.

\begin{minipage}[htbp]{0.49\textwidth} %
\begin{table}[H]
    \setstretch{1.5} %
    \centering
    \scriptsize
    \caption{\small FID and average time per sampling for MSM with varying stochastic loop iterations $w$ on face images. Note that the trade-off exists where larger \(w\) reduces FID but increases sampling time, and vice versa.}
    \vspace{0.15cm}
    \renewcommand{\arraystretch}{1.1}
    \begin{tabular}{@{}p{1.5cm}p{2.5cm}p{0.5cm}p{0.5cm}@{}p{0.2cm}@{}p{0.5cm}p{0.5cm}p{0.5cm}@{}}
    \toprule
    \noalign{\vskip -.8ex}
    \textbf{Training data} & \multicolumn{1}{c}{\textbf{Methods}} & \multicolumn{1}{c}{\textbf{FID}$\downarrow$} & \multicolumn{1}{c}{\textbf{Time ($s$)}} \\
    \cmidrule{1-4}  \noalign{\vskip -0.8ex}
    \multirow{3}{*}{$p = 0.4, \eta = 0$} & \multicolumn{1}{c}{MSM ($w=4$)} & \multicolumn{1}{c}{$85.02$} & \multicolumn{1}{c}{$1.02$}  \\
      & \multicolumn{1}{c}{MSM ($w=2$)} & \multicolumn{1}{c}{$90.65$} & \multicolumn{1}{c}{$0.51$} \\
      & \multicolumn{1}{c}{MSM ($w=1$)} & \multicolumn{1}{c}{$125.59$} & \multicolumn{1}{c}{$0.27$} \\
    \bottomrule
    \end{tabular}
    \label{table:face_fid_per_b}
\end{table}
\end{minipage}
\hfill %
\begin{minipage}[htbp]{0.49\textwidth} %
\begin{table}[H]
    \setstretch{1.5} %
    \centering
    \scriptsize
    \caption{\small FID and average time per sampling for MSM with varying stochastic loop iterations $w$ on MR images. Similar to face images, increasing \(w\) reduces FID at the cost of longer sampling time.}
    \vspace{0.15cm}
    \renewcommand{\arraystretch}{1.1}
    \begin{tabular}{@{}p{1.5cm}p{2.5cm}p{0.5cm}p{0.5cm}@{}p{0.2cm}@{}p{0.5cm}p{0.5cm}p{0.5cm}@{}}
    \toprule
    \noalign{\vskip -.8ex}
    \textbf{Training data} & \multicolumn{1}{c}{\textbf{Methods}} & \multicolumn{1}{c}{\textbf{FID}$\downarrow$} & \multicolumn{1}{c}{\textbf{Time ($s$)}} \\
    \cmidrule{1-4}  \noalign{\vskip -0.8ex}
    \multirow{3}{*}{$R = 4, \eta = 0$} & \multicolumn{1}{c}{MSM ($w=4$)} & \multicolumn{1}{c}{$75.08$} & \multicolumn{1}{c}{$2.18$}  \\
      & \multicolumn{1}{c}{MSM ($w=2$)} & \multicolumn{1}{c}{$81.97$} & \multicolumn{1}{c}{$1.12$} \\
      & \multicolumn{1}{c}{MSM ($w=1$)} & \multicolumn{1}{c}{$102.89$} & \multicolumn{1}{c}{$0.57$} \\
    \bottomrule
    \end{tabular}
    \label{table:fastmri_fid_per_b}
\end{table}
\small
\end{minipage}

\begin{figure*}[htbp]
\vspace{-.cm}
\begin{center}
\includegraphics[width=1.0\textwidth]{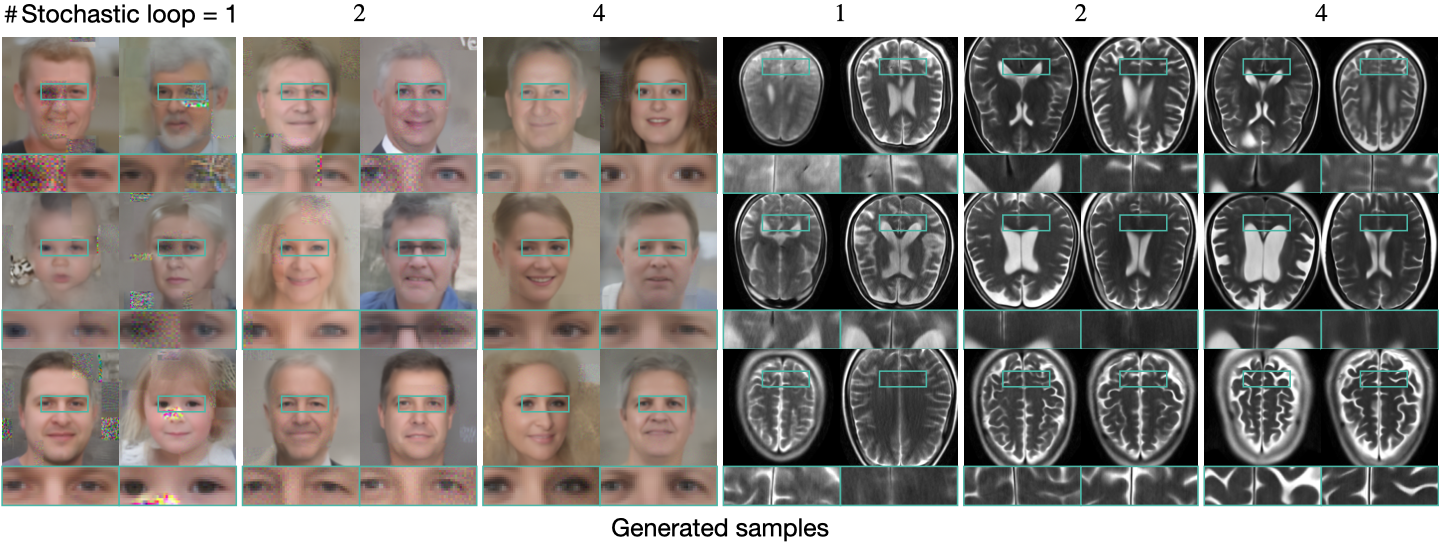}
\end{center}
\caption{Samples generated using different values of the stochastic loop parameter $w$.}
\vspace{-.3cm}
\label{fig:ablation_stochasticloop}
\end{figure*}

\end{document}